\documentclass[11pt]{article} 

\usepackage[utf8]{inputenc} 
\usepackage[T1]{fontenc}    
\usepackage{hyperref}       
\usepackage{url}            
\usepackage{booktabs}       
\usepackage{amsfonts}       
\usepackage{nicefrac}       
\usepackage{microtype}      

\usepackage[square,sort&compress,comma,numbers]{natbib}

\usepackage{tcolorbox}
\usepackage{graphicx}
\usepackage{enumitem}
\usepackage{array}
\usepackage{amsmath}
\usepackage{amssymb}
\usepackage{amsthm}
\usepackage{mathabx}
\usepackage{subcaption}
\usepackage[linesnumbered,ruled,vlined ]{algorithm2e}
\usepackage{makecell}

\renewcommand{\a}{\mathbf{a}} 
 
\newcommand{\x}{\mathbf{x}}

\newcommand{\y}{\mathbf{y}}
\newcommand{\w}{\mathbf{w}}
\newcommand{\bw}{\mathbf{\bar w}}
\renewcommand{\r}{\mathbf{r}}

\newcommand{\p}{\mathbf{p}} 
\newcommand{\bp}{\mathbf{\bar p}} 
\newcommand{\I}{\mathbf{I}}
\newcommand{\X}{\mathbf{X}}

\newcommand{\R}{\mathbb{R}}
\renewcommand{\H}{\mathbf{H}}
\renewcommand{\P}{\mathbb{P}}
\newcommand{\g}{\mathbf{g}}

\newcommand{\nn}{\nonumber}
\newcommand{\as}{\alpha^\star}
\newcommand{\f}{\ensuremath{f}}
\renewcommand{\ln}{LocalNewton}
\renewcommand{\S}{\mathcal{S}}

\newcommand*\bS{\mathbf{S}}
\newcommand*\bB{\mathbf{B}}

\newcommand*\mb{\mathbf{b}}

\newcommand{\su}{\sum_{k=1}^K}
\newcommand{\sut}{\sum_{\tau=t_0}^t}

\newtheorem{theorem}{Theorem}[section]

\newtheorem{lemma}[theorem]{Lemma}

\newtheorem{remark}{Remark}

\setlength{\parskip}{2mm}

    \textwidth     =  6.0in
    \textheight    =  8.2in
    \oddsidemargin =  0.2in
    \topmargin     = 0.0in

\interfootnotelinepenalty=10000

\usepackage{authblk}
\title{LocalNewton: Reducing Communication Bottleneck for Distributed Learning}

\author[1]{Vipul Gupta\thanks{Equal Contribution}}
\author[1]{Avishek Ghosh$^*$}
\author[2]{Micha{\l} Derezi\'nski}
\author[2]{Rajiv Khanna}
\author[1]{Kannan Ramchandran}
\author[2]{Michael W. Mahoney}

\affil[1]{Department of EECS, University of California, Berkeley}
\affil[2]{Department of Statistics, University of California, Berkeley}
\date{\vspace{-5ex}}

\begin{document}

\maketitle

\begin{abstract}

To address the communication bottleneck problem in distributed optimization within a master-worker framework, we propose LocalNewton, a distributed second-order algorithm with \emph{local averaging}. 
In LocalNewton, the worker machines update their model in every iteration by finding a suitable second-order descent direction using only the data and model stored in their own local memory. 
We let the workers run multiple such iterations locally and communicate the models to the master node only once every few (say $L$) iterations. 
LocalNewton is highly practical since it requires only one hyperparameter, the number $L$ of local iterations. 
We use novel matrix concentration based techniques to obtain theoretical guarantees for LocalNewton, and we validate them with detailed empirical evaluation. 
To enhance practicability, we devise an adaptive scheme to choose $L$, and we show that this reduces the number of local iterations in worker machines between two model synchronizations as the training proceeds, successively refining the model quality at the master. 
Via extensive experiments using several real-world datasets with AWS Lambda workers and an AWS EC2 master, we show that LocalNewton requires fewer than $60\%$ of the communication rounds (between master and workers) and less than $40\%$ of the end-to-end running time, compared to state-of-the-art algorithms, to reach the same training~loss. 
\end{abstract}

\section{Introduction}
\label{sec:intro}

An explosion in data generation and data collection capabilities in recent years has resulted in the segregation of computing and storage resources. 
Distributed machine learning is one example where each worker machine processes only a subset of the data, while the master machine coordinates with workers to learn a good model. 
Such coordination can be time-consuming since it requires frequent communication between the master and worker nodes, especially for systems that have large compute resources, but are bottlenecked by communication costs. 

Communication costs in distributed optimization can be broadly classified into two types---(a) latency cost and (b) bandwidth cost \citep{demmel2,oversketch}. 
Latency is the fixed cost associated with sending a message, and it is generally independent of the size of the message. 
Bandwidth cost, on the other hand, is directly proportional to the size of the message. 
Many recent works have focused on reducing the bandwidth cost by reducing the size of the gradient or the model to be communicated using techniques such as sparsification \citep{sparse_comm_distributed2019,stich2018sparsified}, sketching \citep{sgd_with_sketching,konevcny2016federated} and quantization \citep{ghosh_one,lin2017deep,bernstein2018signsgd, hawq, qbert}. 
Schemes that perform inexact updates in each iteration, however, can \emph{increase} the number of iterations required to converge to the same quality model, both theoretically and empirically \citep{sparse_comm_distributed2019,tyagi_limits_quantized2020}. 
This can, in turn, increase the total training time in systems 
where latency costs dominate bandwidth costs. 
\begin{figure*}[t]
\centering
    \begin{subfigure}{.48\textwidth}
        \centering
        \includegraphics[scale=0.6]{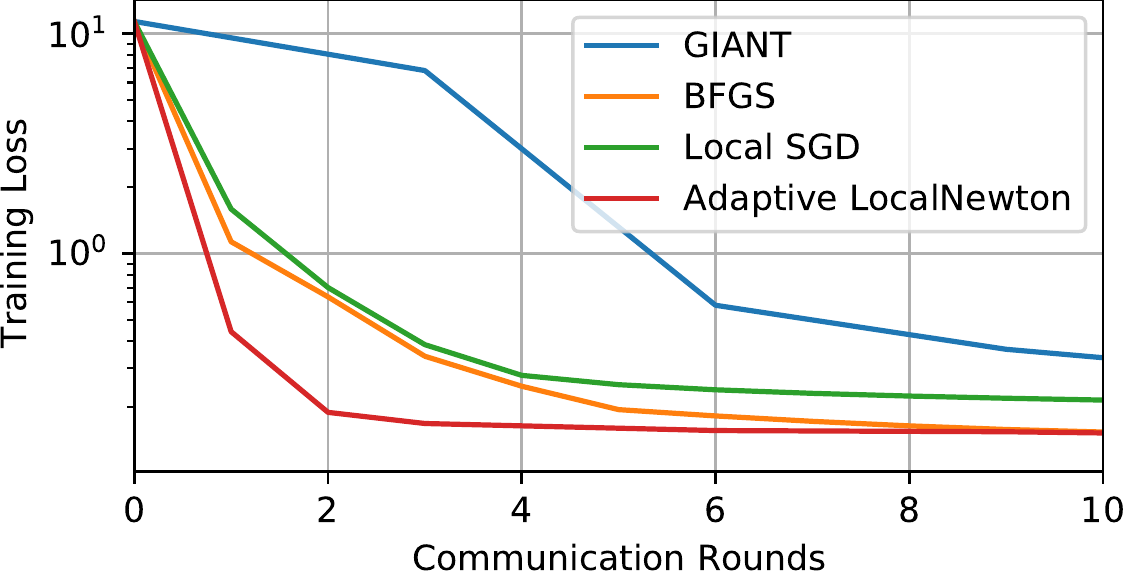}
    \end{subfigure}
    ~~
    \begin{subfigure}{.48\textwidth}
      \centering
    \includegraphics[scale=0.6]{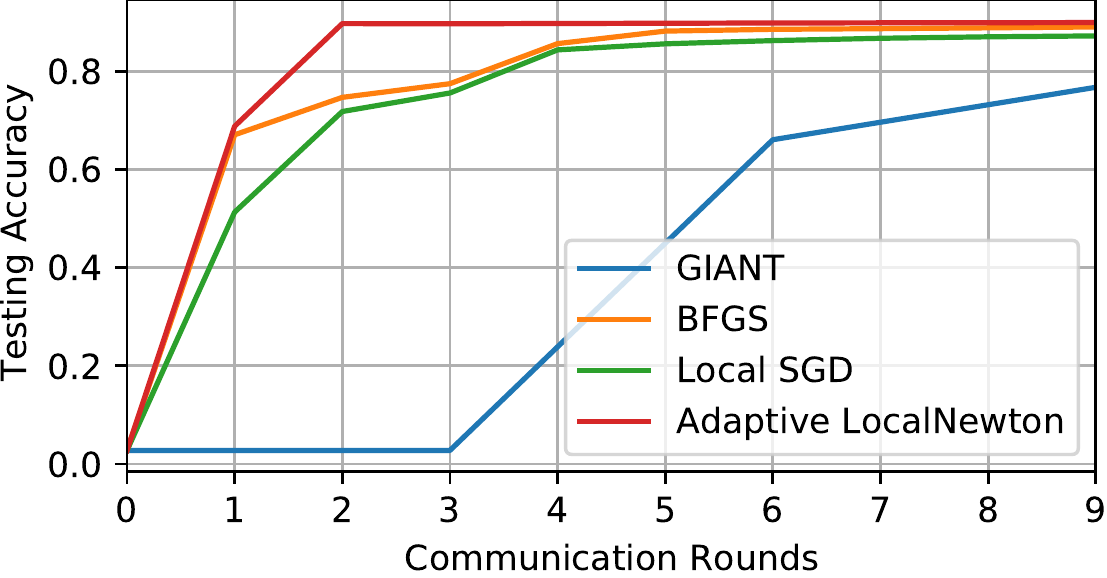}
    \end{subfigure}
    \caption{Training loss and testing accuracy 
    versus communication rounds
    for Adaptive \ln~and existing schemes for communication-efficient optimization. }
    \label{fig:illus}
\end{figure*}

One setting where latency costs may outweigh bandwidth costs is federated learning, where the computation is performed locally at the mobile device (which is generally the source of the data) due to a high-cost barrier in transferring the data to traditional computing platforms \citep{federated_learning_survey2019,konevcny2016federated}. 
Such mobile resources (e.g., mobile phones, wearable devices, etc.) have reasonable compute power, but they can be severely limited by communication latency (e.g., inadequate network connection to synchronize frequent model updates). 
For this reason, schemes like Local Stochastic Gradient Descent (Local SGD) have become popular, since they try to mitigate the communication costs by performing more \emph{local} computation at the worker machines, thus substantially reducing the number of communication rounds required~\citep{pmlr-v54-mcmahan17a}. 

Serverless systems---such as Amazon Web Services (AWS) Lambda and Microsoft Azure Functions---are yet another example where high communication latency between worker machines dominates the running time of the algorithm \citep{osn, hellerstein}. 
Such systems are gaining popularity due to the ease-of-management, greater elasticity and high scalability \citep{serverless_computing, berkeley_view}.
These systems use cloud storage (e.g., AWS S3) to store enormous amounts of data, while using a large number of low-quality workers for large-scale computation. 
Naturally, the communication between the high-latency storage and the commodity workers is extremely slow \citep[e.g., see][]{pywren}, resulting in impractical end-to-end times for many popular optimization algorithms such as SGD \citep{hellerstein,osn}. Furthermore, communication failures between the cloud storage and serverless workers consistently give rise to stragglers, and this introduces synchronization delays~\citep{local_codes, mert_polar_serverless}. 

These trends suggest that optimization schemes that reduce communication rounds between workers are highly desirable. In this paper, we go one-step forward---we propose and analyze a second order method with \emph{local} computations. Being a second order algorithm, the iteration complexity of LocalNewton is inherently low. Moreover, its local nature further cuts down the communication cost between the worker and the master node. To the best of our knowledge, this is the first work to propose and analyze a second order optimization algorithm with local averaging. 

Additionally, in this paper, we introduce an adaptive variant of the \ln~algorithm, namely Adaptive \ln. This algorithm chooses the number of local iterations adaptively at the master after each communication round by observing the change in training loss. Thus, it further refines the iterates obtained through \ln~by adaptively and successively reducing the number of local iterations, thereby improving the quality of the model updates. Furthermore, when the number of local iterations, $L$, reduces to $1$, Adaptive LocalNewton automatically switches to a standard second order optimization algorithm, namely GIANT, proposed in ~\citep{giant_nips}.

In Sec.~\ref{sec:theoretical_guarantees}, we show that the iterates of \ln~converge to a norm ball (with small radius) around the global minima. From this point of view, we may think of \ln~as an initialization algorithm, rather than an optimization one; since it takes the iterates close to the optimal point in only a few rounds of communication. After reaching sufficiently close to the solution point, one may choose some standard optimization algorithm to reach to the solution. In Adaptive \ln, we first exploit the fast convergence of \ln~($L \geq 1$), and afterwards, when $L=1$, Adaptive \ln~switches to the standard optimization algorithm (e.g., GIANT \citep{giant_nips}).

{\bf Our contributions.} 
Inspired by recent progress in local optimization methods (that reduce communication cost by limiting the frequency of synchronization) and distributed and stochastic second order methods (that use the local curvature information), we propose a local second-order algorithm called \ln.
The proposed \ln\ method saves on communication costs in two ways. 
First, it updates the models at the master only sporadically, thus requiring only one communication round per multiple iterations. 
Second, it uses the second-order information to reduce the number of iterations, and hence it reduces the overall rounds of communication. 

Important features of \ln~include:

1. \emph{Simplicity}: In \ln, each worker takes only a few Newton steps \citep{boyd} on local data, agnostic of other workers. These local models are then averaged once every $L (\geq 1)$ iterations at the master node.

2. \emph{Practicality}: Unlike many first-order and distributed second-order schemes, \ln~does not require hyperparameter tuning for step-size, mini-batch size, etc., and the only hyperparameter required is the number of local iterations $L$. 
We also propose Adaptive \ln, an adaptive version which automatically reduces $L$ as the training proceeds by monitoring the training loss at the master.

3. \emph{Convergence guarantees}: 
In general, proving convergence guarantees for local algorithms is not straightforward.  Only recently, it has been proved \citep{stich2018local} that local SGD converges as fast as SGD, thereby explaining the well-studied empirical successes~\citep{konevcny2016federated}. 
In this paper, we develop novel techniques to highlight the  convergence behaviour of \ln.

4. \emph{Reduced training times}: We implement \ln~on the Pywren framework \citep{pywren} using AWS Lambda\footnote{A high-latency \emph{serverless} computing platform.} workers and an AWS EC2\footnote{A traditional \emph{serverful} computing platform.} master. Through extensive empirical evaluation, we show that the significant savings in terms of communication rounds translate to savings in running time on this high-latency distributed computing environment.

5. \emph{Adaptivity}: We propose Adaptive \ln, which is an adaptive variant of the \ln~algorithm. In the adaptive scheme, based on the change in function objective value, the master modulates the number of local iterations at the worker machines. This improves the quality of the model updates as discussed in detail in Sec. \ref{sec:adapt_local_newton}.

6. \emph{LocalNewton as an initialization algorithm}: Since the convergence guarantees of LocalNewton (see Sec.~\ref{sec:theoretical_guarantees}) only ensure that the iterates stay in a norm ball around the minima,  one may rethink \ln~as an initialization algorithm, rather than an optimization one. In only a very few communication rounds, LocalNewton takes the iterates very close to the optimal solution. After that, our algorithm switches to a standard second-order algorithm, GIANT ~\citep{giant_nips}.

Fig. \ref{fig:illus} illustrates savings due to Adaptive \ln, where we plot training loss and test accuracy with communication rounds, for several popular communication-efficient schemes for logistic regression on the w8a dataset \citep{libsvm} (see Sec. \ref{sec:experiments} for a details on experiments). 
Observe that Adaptive \ln~reaches close to the optimal training loss very quickly, when compared to schemes like Local SGD \citep{konevcny2016federated, stich2018local}, GIANT \citep{giant_nips} and BFGS \citep{bfgs_fletcher2013practical}.

{\bf Related Work.}
In recent years, schemes such as local SGD have gained popularity, as they are communication-efficient due to only sporadic model updates at the master \citep{konevcny2016federated, pmlr-v54-mcmahan17a, federated_learning_survey2019, MLSYS2019_c8ffe9a5}. 
Such schemes that show great promise have eluded a thorough theoretical analysis until recently, when it was shown that local SGD converges at the same rate as mini-batch SGD \citep{stich2018local,haddadpour2019local,patel_aymeric_nips2019}. Similar ideas that reduce communication by averaging the local models sporadically have also been applied in training neural networks to improve the training times and/or model performance \citep{local_sgd_dnn,swap}.
Apart from local first-order methods, 
many distributed second-order (also known as Newton-type) algorithms have been proposed to reduce communication (e.g., \cite{jadbabaie2009distributed} and \cite{tutunov2019distributed} propose second-order algorithms in a decentralized setting where communication happens over a predefined graph, and \cite{derezinski2018batch} proposes batch-expansion training which reduces communication in early phases of the optimization).
More recently, many communication-efficient second-order methods were proposed in the centralized setting, e.g., see
\citep{giant_nips,dane,disco,cocoa, aide, pmlr-v80-duenner18a,det_avg,derezinski2020debiasing,ghosh_comrade,ghosh2020communication}. 
Such methods use both the gradient and the curvature information to provide an order of improvement in convergence, compared to vanilla first-order methods. 
This is done at the cost of more local computation per iteration, which is acceptable for systems with high communication latency. 
However, such algorithms require at least two communication rounds (for averaging gradients and the second-order descent direction), and a thorough knowledge of a fundamental trade-off between communication and local computation is still lacking for these methods. 
Stochastic second order optimization theory has been developed recently~\citep{fred1,fred2,Fred:non-convex}, and second order implementations motivated by this theory have been shown to outperform state-of-the-art~\citep{yao2019pyhessian,YGSKM20_adahessian_TR}.

\section{Problem Formulation}

We first define the notation used and then introduce the basic problem setup considered in this paper.

{\bf Notation.}  Throughout the paper, vectors (e.g., $\g$) and matrices (e.g., $\H$) are represented as bold lowercase and uppercase letters, respectively. For a vector $\g$, $\|\g\|$ denotes its $\ell_2$ norm, and $\|\H\|_2$ denotes the spectral norm of matrix $\H$. The identity matrix is denoted as $\I$,  and the set $\{1,2,\cdots, n\}$ is denoted as $[n]$, for all positive integers $n$.
Further, we use superscript (e.g., $\g^k$) to denote the worker index and subscript (e.g., $\g_t$) to denote the iteration counter (i.e., time index), unless stated otherwise.

{\bf Problem Setup.} We are interested in solving empirical risk minimization problems of the following form in a distributed fashion
\begin{equation}\label{mainProblem}
    \min _{\mathbf{w} \in \mathbb{R}^{d}}\left\{f(\mathbf{w}) \triangleq \frac{1}{n} \sum_{j=1}^{n} f_{j}\left(\mathbf{w}\right)\right\},
\end{equation}
where $f_j(\cdot): \R^d\rightarrow \R$, for all $j\in [n] = \{1,2,\cdots,n\}$, models the loss of the $j$-th observation given an underlying parameter estimate $\w\in \R^d$. In machine learning, such problems arise frequently, e.g., logistic and linear regression, support vector machines, neural networks and graphical models. 
Specifically, in the case of logistic regression, 
$$f_j(\w) = \ell_j(\w^T\x_j) = \log(1 + e^{-y_j\w^T\x_j}) + \frac{\gamma}{2} \|\w\|^2,$$
where $\ell_j(\cdot)$ is the loss function for sample $j\in[n]$ and $\gamma$ is an appropriately chosen regularization parameter. 
Also, $\X = [\x_1, \x_2, \cdots, \x_n] \in \R^{d\times n}$ is the sample matrix containing the input feature vectors $\x_j \in \R^d, j \in [n]$, and $\y = [y_1, y_2, \cdots, y_n]$ is the corresponding label vector. 
Hence, $(\x_j, y_j)$ together define the $j$-th observation and $(\X,\y)$ define the training~dataset.

For such problems,
the gradient and the Hessian at the $t$-th iteration are given by
\begin{align*}
\mathbf{g}_{t}&=\nabla f\left(\mathbf{w}_{t}\right)=\frac{1}{n} \sum_{j=1}^{n} \nabla f_j(\w_t) \text{ and } \\
\mathbf{H}_{t}&=\nabla^{2} f\left(\mathbf{w}_{t}\right)=\frac{1}{n} \sum_{j=1}^{n} \nabla^{2} f_j\left(\mathbf{w}_{t}\right),
\end{align*}
respectively, where $\w_t$ is the model estimate at the $t$-th iteration.

{\bf Data distribution at each worker}: Let there be a total of $K$ workers. We assume that the $k$-th worker is assigned a subset $\mathcal{S}_k \subset [n]$, for all $k\in [K] = \{1,2,\cdots, K\}$, of the $n$ data points, chosen uniformly at random without replacement.\footnote{This corresponds simply to partitioning the dataset and assigning an equal number of observations to each worker, if the observations are independent and identically distributed. If not, randomly shuffling the observations and then performing a data-independent partitioning is equivalent to uniform sampling without replacement.} Let the number of samples at each worker be $s = |\mathcal{S}_k| ~\forall~k\in [K]$, where $s \ll n$ in practice. Also, by the virtue of sampling without replacement, we have $\S_1 \cup \S_2 \cup \cdots \cup \S_K = [n]$ and $\S_i \cap \S_j = \Phi$ for all $i,j \in [K]$.
Hence, the number of workers is given by $K = n/s$. 


\section{Algorithms}
\label{sec:algos}
In this section, we propose two novel algorithms for distributed optimization. First, we propose \ln, a second order algorithm with local averaging. Subsequently, we also propose an adaptive variant of \ln. Here \ln~acts as a good initialization scheme that pushes its iterates close to the optimal solution in a small number of communication rounds. Finally, a standard second-order algorithm is used to converge to the optimal solution.

\subsection{\ln}

We consider synchronous second-order methods for distributed learning, where local models are synced after every $L$ iterations. Let $\mathcal{I}_t \subseteq [t]$ be the set of indices where the model is synced, that is, $\mathcal{I}_t = [0,L,2L, \cdots, t_0]$, where $t_0$ is the last iteration just before $t$ where the models were synced. 

At the $k$-th worker in the $t$-th iteration, the local function value (at the local iterate $\w_t^k$) is 
\begin{equation}\label{local_fn_value}
    f^k(\w_t^k) = \frac{1}{s}\sum_{j\in\S_k}f_{j}(\w_t^k).
\end{equation}

\begin{algorithm}[t]
\SetAlgoLined
\SetKwInOut{Input}{Input}
\Input{Local function $f^k(\cdot)$ at the $k$-th worker; Initial iterate $\bw_0\in\mathbb{R}^d$; Line search parameter $0 < \beta \leq 1/2$; Number of iterations $T$, Set $\mathcal{I}_T \subseteq \{1,2,\cdots,T\}$ where models are synced}
\vspace{5pt}
\For{$k=1$ to $K$ in parallel}{
{\bf Initialization}: $\w_0^k = \bw_0$\\
  \For{$t=0$ to $T-1$}{
  \If{$t\in \mathcal{I}_T$}{
\tcp{Master averages the local models}
$\bw_t = \frac{1}{K}\sum_{k=1}^K \w_t^k$ \\
$\w_t^k = \bw_t$
}
\tcp{Compute the local gradient}
$\g^k(\w_t^k) = \nabla f^k(\w_t^k)$. \\
\tcp{Compute the update direction}
$\p_t^k = \H^k(\w_t^k)^{-1}\g^k(\w_t^k)$ \\
Find step-size $\alpha_t^k$ using line search (Eq. \eqref{local-ss})\\
Update model: $\w_{t+1}^k = \w_t^k - \alpha_t^k \p_t^k \quad$ 
}}
 \caption{\ln}
 \label{algo:localnewton}
\end{algorithm}

The $k$-th worker tries to minimize the local function value in Eq \eqref{local_fn_value} in each iteration.
The corresponding local gradient $\g_t^k$ and  local Hessian $\H_t^k$, respectively, at $k$-th worker in $t$-th iteration can be written as 
\begin{align*}
\mathbf{g}_{t}^k = \nabla f^k(\w_t^k) = \frac{1}{s} \sum_{j\in\S_k} \nabla f_j(\w_t^k) \text{ and } \\
\mathbf{H}_{t}^k = \nabla^2 f^k(\w_t^k) = \frac{1}{s} \sum_{j\in\S_k} \nabla^{2} f_j(\mathbf{w}_{t}^k).
\end{align*}

Let us consider the following \ln~update at the $k$-th worker and $(t+1)$-st iteration:
\begin{align}
\label{w_update} 
\w_{t+1}^k = 
&\begin{cases}
\w_t^k - \alpha_t^k\H^k(\w_t^k)^{-1}\g^k(\w_t^k), & \text{ if } t \notin \mathcal{I}_t \\
\bw_t - \alpha_t^k\H^k(\bw_t)^{-1}\g^k(\bw_t), & \text{ if } t \in \mathcal{I}_t ,
\end{cases}
\end{align}
where 
$\bw_t = \frac{1}{K}\sum_{k=1}^K \w_t^k~ \forall ~t,$
and $\alpha_t^k$ is the step-size at the $k$-th worker at iteration $t$.%
\footnote{In practice, one need not calculate the exact $\H^k(\w_t^k)^{-1}\g^k(\w_t^k)$, and efficient algorithms like conjugate gradient descent can be used \citep{cg}.}  

Also, define the local descent direction at the $k$-th worker at iteration $t$ as
$\p_t^k = \alpha_t^k\H^k(\w_t^k)^{-1}\g^k(\w_t^k)$ and similarly define $\bp_t = \frac{1}{K}\sum_{k=1}^K \p_t^k.$
We can see that $\bw_{t+1} = \bw_t - \bp_t$. Detailed steps for \ln~are provided in Algorithm \ref{algo:localnewton}.

Note that $\bw_t$ is not explicitly calculated for all $t$, but only for $t\in \mathcal{I}_t$. However, we will use the technique of perturbed iterate analysis and show the convergence of the sequence $f(\bw_1), f(\bw_2), \cdots, f(\bw_t)$ to $f(\w^*)$. In the next section, we present Adaptive \ln, an algorithm to adaptively choose the number of local iterations, $L$.

\subsection{Adaptive \ln}\label{sec:adapt_local_newton}

\begin{algorithm}[t]
\SetAlgoLined
\SetKwInOut{Input}{Input}
\Input{Minimum decrement $\delta (> 0)$ in global loss function;}
\vspace{5pt}
{\bf Initialization}: $f_{prev} = f(\bw_0)$, Number of local iterations $L$, Set $\mathcal{I}_T \subseteq \{1,2,\cdots,T\}$ where models are synced every $L$ local iterations\\
  \For{$t=1$ to $T-1$}{
  \tcp{$k$-th worker runs \ln~ locally to get $\w_t^k$}
  Workers run Algorithm \ref{algo:localnewton} \\
  \If{$t\in \mathcal{I}_T$}{
\tcp{Master averages the local models}
$\bw_t = \frac{1}{K}\sum_{k=1}^K \w_t^k$ \\
\If{$f_{prev} - f(\bw_t) < \delta$ and $L\geq 1$}{
\tcp{The global function did not decrease enough}
\If{$L=1$}{Switch to GIANT \citep{giant_nips}}
\Else{
Decrease $L$: $L = L-1$\\
Update $\mathcal{I}_T$ according to the new value of $L$
}
}
$f_{prev} = f(\bw_t)$
}}
 \caption{Adaptive \ln}
 \label{algo:adap_localnewton}
\end{algorithm}

\begin{figure*}[t]
    \begin{subfigure}{.45\textwidth}
        \centering
        \includegraphics[scale=0.4]{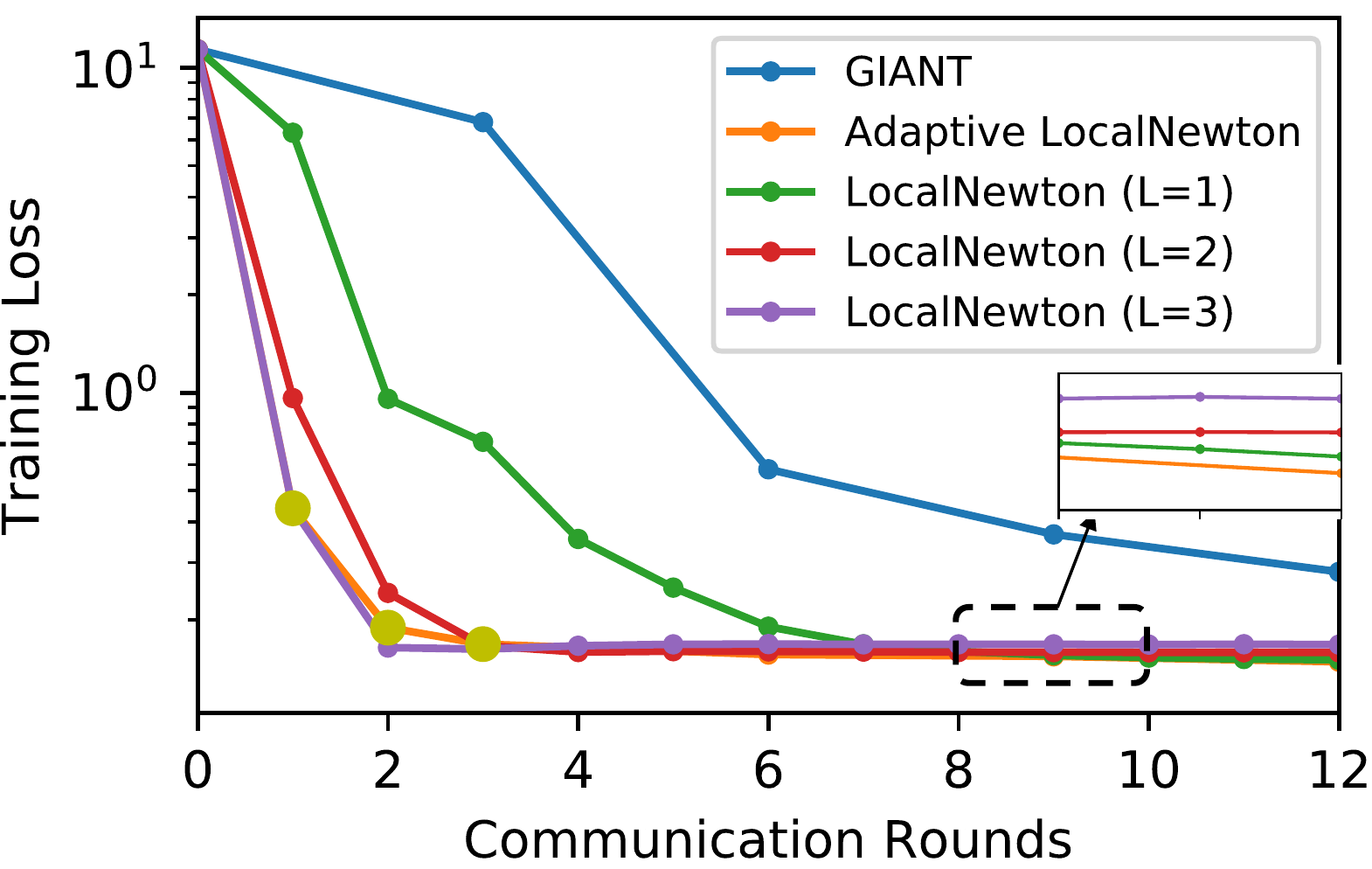}
        \caption{w8a dataset}
    \end{subfigure}
    ~
    \begin{subfigure}{.45\textwidth}
      \centering
    \includegraphics[scale=0.66]{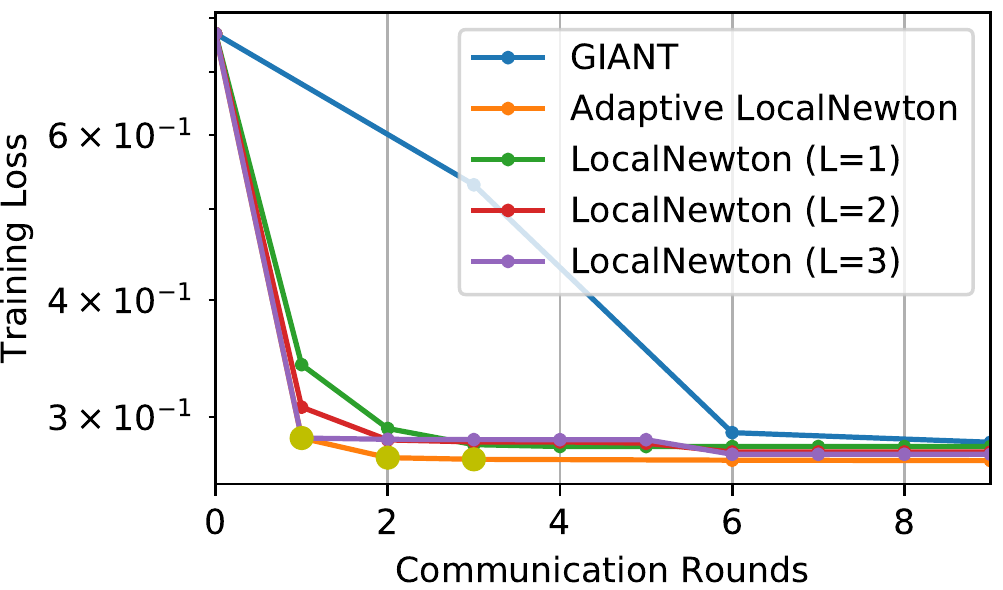}
    \caption{EPSILON dataset}
    \end{subfigure}

    \caption{Comparing \ln~(for different values of $L$) and GIANT. In general, \ln~reaches very close to the optimal solution. In that region, however, the convergence rate of \ln~is slow. This is mitigated by using Adaptive \ln~which appends the \ln~iterations with better quality (but more expensive) updates from GIANT.}
    \label{fig:L123_giant}
\end{figure*}

{\bf Motivating Example (Least-squares).} Let us now consider the simple example of unregularized linear least squares, i.e., the loss function at the $k$-th worker is $f^k(\w) = \frac{1}{s}\|\y^k - \X^k \w\|^2$, where $\y^k = [y_1^k,y_2^k,\ldots,y_s^k]^\top \in \mathbb{R}^s$ and $\X = [\x_1^k,\ldots,\x_s^k]^\top \in \mathbb{R}^{s \times d}$. Note that one second-order iteration (with step-size one) reaches the optimal solution, say $(\w^k)^* = \w^k_t - (\H^k_t)^{-1}\g^k\w^k_t = [(\X^k)^T\X^k]^{-1}(\X^k)^T\y^k ~\forall~\w_t^k\in\R^{d}$, for the local loss function $f^k(\w)$ at the $k$-th worker.

Thus, applying \ln~here with $L\geq 1$ would imply that the local iterates at the $k$-th worker are fixed at $\w^k_t = (\w^k)^*$ while the global iterates at the master are fixed at $\bw_t = \frac{1}{K}\su(\w^k)^*$ for all $t\in [T]$. Note that $\bw_t \neq \bw^*$ in general, where $\bw^*$ is the optimal solution for the global problem in Eq. \eqref{mainProblem}. Hence, \ln~(Algorithm \ref{algo:localnewton}) does not reach the optimal solution for unregularized least-squares.
In fact, in Theorems~\ref{THM:L1} and \ref{THM:LGT1}, we show that running \ln~ algorithm results in an error floor of the order $1/\sqrt{s}$ for any convex loss function.

Motivated from the above example, if we want to attain the optimal solution, one needs to switch to optimization algorithms which yield no error floor. One standard example of such an algorithm is GIANT~(\cite{giant_nips}), which is a communication-efficient distributed second-order algorithm. However, for general convex functions, the convergence of GIANT requires the initial point to be close to the optimal solution. From this point of view, one can think of \ln~as an initialization scheme that pushes the iterates close to the solution (within a radius of $\mathcal{O}(1/\sqrt{s})$) with a few rounds of communication. Then, one can switch to the GIANT algorithm to obtain convergence to the solution point. We call this algorithm Adaptive \ln, where the master modulates the value of $L$ successively over iterations and finally switches to GIANT to obtain the final solution. The details are given in Algorithm~\ref{algo:adap_localnewton}.

Recall that GIANT synchronizes the local gradients and the local descent direction in every iteration \citep{giant_nips}. Further, it finds the step-size by doing a distributed backtracking line-search requiring an additional round of communication (Sec. 5.2, \cite{giant_nips}). Finally, the master updates the model by using the average descent direction and the obtained step-size and ships the model to all the workers. Thus, each iteration in GIANT requires three rounds of communication. 
This approach has compared favorably to other popular distributed second-order methods (e.g., DANE \cite{dane}, AGD \cite{nesterov_book}, BFGS \cite{bfgs_fletcher2013practical}, CoCoA \cite{cocoa}, DiSCO \cite{disco}). 

In Fig. \ref{fig:L123_giant}, we compare GIANT to \ln, where \ln~is run with 100 workers for $L=1,2$ and $3$, for two datasets---w8a and EPSILON obtained from LIBSVM \citep{libsvm}.
Note that \ln~converges much faster with respect to communication rounds for all the three datasets since it communicates intermittently, i.e., once every few local second-order iterations (e.g., after 3 local iterations for $L=3$). Not shown here is that testing accuracy follows the same trends. Further, the quality of the final solution improves as we reduce $L$. However, it reaches extremely close to the optimal training loss, but it converges very slowly (or flattens out) after that.

These empirical observations further motivate Adaptive \ln: a second-order distributed algorithm that adapts the number of local iterations as the training progresses and ultimately finishes with GIANT. 
This can be done by monitoring the objective function at the master, e.g., reduce $L$ if the loss stops improving (or switch to GIANT if $L=1$).%
\footnote{To further reduce the communication rounds and dependency on $L$, each worker can update the model for multiple values of $L$ and send the concatenated model updated to the master. The master can decide the right value of $L$ by evaluating the loss/accuracy for these different models.} 
See Algorithm \ref{algo:adap_localnewton}, where we provide the pseudo-code for Adaptive \ln. 
Whenever the global function value at the master does not decrease more than a constant $\delta$, we decrease the value of $L$ to improve the quality of second order estimate. 
In this sense, Adaptive \ln~can be seen as providing a carefully-constructed or gradually-annealed initialization for GIANT.

\emph{Comparison with GIANT}: Algorithm~\ref{algo:adap_localnewton} is further motivated by the theoretical guarantees we obtain in the subsequent section. In Theorems \ref{THM:L1} and \ref{THM:LGT1} of Sec. \ref{sec:theoretical_guarantees}, we prove the convergence of \ln~to the optimal solution within an error floor starting with any initial point in $\R^d$. In sharp contrast, Theorem 2 in GIANT \citep{giant_nips}, the authors convergence guarantees to the  optimal solution when the current model is sufficiently close to the optimal model. 
From Fig.~\ref{fig:L123_giant}, we see that Adaptive \ln~significantly outperforms GIANT in terms of rounds of communication. Adaptive \ln~starts from L=3, and yellow dots in its plot denote the reduction in the value of $L$ by one or a switch to GIANT if $L=1$.

\section{Convergence Guarantees}
\label{sec:theoretical_guarantees}
In this section, we present the main theoretical contributions of the paper. For this, we only consider the (non-adaptive)  LocalNewton algorithm. Obtaining theoretical guarantees for Adaptive LocalNewton is kept as an interesting future work.

First, we delineate some assumptions on $f(\cdot)$ required to prove theoretical convergence of the proposed method.

{\bf Assumptions}: We make the following standard assumptions on the objective function $f(\cdot)$ for all $\w \in \R^d$:
\begin{enumerate}
    \item $f_i(\cdot)$, for all $i\in[n]$, is twice differentiable.
    \item $f(\cdot)$ is $\kappa$-strongly convex, that is,
        $\nabla^{2} f(\mathbf{w}) \succcurlyeq \kappa \mathbf{I}$.
    \item $f(\cdot)$ is $M$-smooth, that is, $\nabla^{2} f(\mathbf{w}) \preccurlyeq M \mathbf{I}$.
    \item $\|\nabla^2 f_i(\cdot)\|_2, i\in [n],$ is upper bounded. That is, $\nabla^2 f_i(\w) \preccurlyeq B\I$, for all $i\in[n]$. 
\end{enumerate} 

In the following lemma, we make use of matrix concentration inequalities to show that, for sufficiently large sample size $s$, the local Hessian at each worker is also strongly convex and smooth with high probability.

\begin{lemma}\label{lemma:subsampling_guarantee1}
Let $f(\cdot)$ satisfy assumptions 1-4 and $0 <\epsilon \leq 1/2$ and $0 < \delta <1$ be fixed constants. Then, if $s \geq \frac{4B}{\kappa\epsilon^2}\log\frac{2d}{\delta}$, the local Hessian at the $k$-th worker satisfies
\begin{align}
(1-\epsilon)\kappa \preccurlyeq \nabla^2 f^k(\w) = \H^k(\w) \preccurlyeq (1 + \epsilon)M,
\end{align}
for all $\w\in\R^d ~\text{and}~ k \in [K]$ with probability at least $1 - \delta$.
\end{lemma}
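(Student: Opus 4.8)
The plan is to bound the deviation of the local (subsampled) Hessian from the full Hessian using a matrix Bernstein / matrix Chernoff argument, and then sandwich the resulting spectrum between the strong-convexity and smoothness bounds on the full Hessian. Fix $\w\in\R^d$ and $k\in[K]$. Since $\S_k$ is a uniformly random subset of $[n]$ of size $s$ drawn without replacement, $\H^k(\w)=\frac1s\sum_{j\in\S_k}\nabla^2 f_j(\w)$ is an average of $s$ (dependent, but sampled without replacement) PSD matrices, each bounded in spectral norm by $B$ (Assumption 4), with mean equal to the full Hessian $\H(\w)=\nabla^2 f(\w)$, which by Assumptions 2--3 satisfies $\kappa\I\preccurlyeq\H(\w)\preccurlyeq M\I$. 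I would invoke a matrix Chernoff bound for sums of bounded PSD random matrices (e.g.\ Tropp's matrix Chernoff, which also covers sampling without replacement), in the relative-error form: for the normalized sum $\H^k(\w)$ one gets, for $0<\epsilon\le 1/2$,
\begin{equation*}
\P\big(\H^k(\w)\not\preccurlyeq(1+\epsilon)M\big)\ \le\ d\exp\!\Big(-\tfrac{s\,\epsilon^2\,\lambda_{\max}(\H)}{3B}\Big),\qquad
\P\big(\H^k(\w)\not\succcurlyeq(1-\epsilon)\kappa\big)\ \le\ d\exp\!\Big(-\tfrac{s\,\epsilon^2\,\lambda_{\min}(\H)}{2B}\Big).
\end{equation*}
Using $\lambda_{\min}(\H)\ge\kappa$ on the lower tail (and $\lambda_{\max}(\H)\ge\kappa$ suffices on the upper tail since $M\ge\kappa$), both tail probabilities are at most $d\exp(-s\epsilon^2\kappa/(4B))$.

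Next I would set this bound to $\delta/2$ per side: requiring $d\exp(-s\epsilon^2\kappa/(4B))\le\delta/2$ is exactly $s\ge\frac{4B}{\kappa\epsilon^2}\log\frac{2d}{\delta}$, which is the hypothesis. A union bound over the two tails then gives that, with probability at least $1-\delta$,
\begin{equation*}
(1-\epsilon)\kappa\,\I\ \preccurlyeq\ \H^k(\w)\ \preccurlyeq\ (1+\epsilon)M\,\I,
\end{equation*}
which is the claimed statement. (I would state it as the one-line chain $(1-\epsilon)\kappa\preccurlyeq \nabla^2 f^k(\w)=\H^k(\w)\preccurlyeq(1+\epsilon)M$ as in the lemma, reading the scalars as multiples of $\I$.)

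The one genuinely delicate point is the quantifier on $\w$: the lemma asserts the bound \emph{for all} $\w\in\R^d$ simultaneously, whereas the matrix Chernoff argument above is pointwise in $\w$. I would handle this either (i) by noting that the randomness is only in the choice of $\S_k$, not in $\w$, and that Assumption 4 gives the uniform operator-norm bound $\nabla^2 f_j(\w)\preccurlyeq B\I$ for every $j$ and every $\w$, so that the event ``$(1-\epsilon)\kappa\preccurlyeq\H^k(\w)\preccurlyeq(1+\epsilon)M$'' can be made to hold for all $\w$ once one controls a suitable covering of the relevant compact region, or (ii) more cleanly, by observing that in the convergence analysis one only ever needs the bound along the (finite or bounded) trajectory of iterates, so it suffices to have it hold at each required $\w$ with the stated probability and absorb a union bound into constants. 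The cleanest exposition simply invokes the pointwise matrix concentration inequality and remarks that the same $\S_k$ works uniformly because the per-sample bound $B$ in Assumption 4 is $\w$-independent; I expect the authors' proof to take this route, so the main ``obstacle'' is really just being careful about which matrix-Chernoff variant to cite (sampling without replacement vs.\ i.i.d., relative vs.\ absolute error, and the constants $2$ vs.\ $3$ in the exponent) so that the threshold comes out as exactly $\frac{4B}{\kappa\epsilon^2}\log\frac{2d}{\delta}$.
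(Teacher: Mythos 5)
Your proof takes essentially the same route as the paper's: a matrix Chernoff bound (valid for uniform sampling without replacement) applied to the two spectral tails, simplified to a $d\,e^{-s\kappa\epsilon^2/(4B)}$ bound per tail and combined via a union bound, which yields exactly the threshold $s \ge \frac{4B}{\kappa\epsilon^2}\log\frac{2d}{\delta}$. The only differences are cosmetic (you quote the $\epsilon^2/2$ and $\epsilon^2/3$ relative-error forms and then loosen to $\epsilon^2/4$, whereas the paper massages the exact Chernoff expressions directly to $\epsilon^2/4$ using $\epsilon\le 1/2$), and your caveat about the ``for all $\w$'' quantifier is, if anything, more careful than the paper, whose proof is likewise pointwise in $\w$ and relies only on the $\w$-independent per-sample bound $B$.
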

\begin{proof}
See Appendix \ref{app:aux_lemmas}.
\end{proof}

{\bf Step-size selection:} Let each worker locally choose a step-size according to the following rule
\begin{align}
    \alpha_{t}^k &=\max _{\alpha \leq \as} \alpha \quad \text { such that } \nn\\ 
    f^k\left(\mathbf{w}_{t}^k - \alpha \mathbf{p}_{t}^k\right) &\leq f^k\left(\mathbf{w}_{t}^k\right) - \alpha \beta \mathbf({\p}_{t}^k)^{T} \nabla f^k\left(\mathbf{w}_{t}^k\right),\label{local-ss}
\end{align}
for some constant $\beta \in (0,1/2]$, where the parameter $\as (\leq 1)$ depends on the properties of the objective function:%
\footnote{We introduce $\as$ here to establish theoretical guarantees. In our empirical results, we use the Armijo backtracking line-search rule with $\as=1$ (e.g., see \cite{boyd}) to find the right step-size.}
\begin{align}
\label{as_condition}
	\as \leq \min\left\{\frac{(1-\beta)\kappa}{M}, \frac{2\beta\kappa^2}{3M[M - \kappa/4]}\right\}.
\end{align}

Now, we are almost ready to prove the main theorems---Theorem~\ref{THM:L1} discusses the case when $L = 1$, and Theorem~\ref{THM:LGT1} discusses the case when $L > 1$.
Before that, let us state the following auxiliary lemma which is required to prove the main theorems.

\begin{lemma}
\label{lemma:local_linear_convergence1}
Let the function $f(\cdot)$ satisfy assumptions 1-3, and suppose that step-size $\alpha_t^k$ satisfies the line-search condition in \eqref{local-ss}. Also, let $0<\epsilon < 1/2$ and $0<\delta<1$ be fixed constants. Moreover, let the sample size $s \geq \frac{4B}{\kappa\epsilon^2}\log\frac{2d}{\delta}$. Then, the LocalNewton update, defined in  Eq. \eqref{w_update}, at the $k$-th worker satisfies
\begin{align*}
     f^k(\w_{t+1}^k) - f^k(\w_t^k) \leq  - \psi ||\g_t^k||^2 ~\forall~k\in[K],
\end{align*}
with probability at least $1 - \delta$, where $\psi = \frac{\as\beta}{M(1+\epsilon)}$. 
\end{lemma}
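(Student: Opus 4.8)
The plan is to establish a one-step descent inequality for the local Newton update by combining the line-search guarantee \eqref{local-ss} with the curvature bounds from Lemma~\ref{lemma:subsampling_guarantee1}. Since the claim concerns a single worker $k$ and a single iteration, and since on the synchronization iterations $t\in\mathcal I_t$ the update is structurally identical (replace $\w_t^k$ by $\bw_t$), it suffices to analyze the generic update $\w_{t+1}^k=\w_t^k-\alpha_t^k\p_t^k$ with $\p_t^k=(\H^k(\w_t^k))^{-1}\g^k(\w_t^k)$; I will just write $\H_t^k,\g_t^k,\p_t^k$ for brevity.

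First I would invoke the line-search condition \eqref{local-ss}, which directly gives
\begin{align*}
f^k(\w_{t+1}^k)-f^k(\w_t^k)\le -\alpha_t^k\,\beta\,(\p_t^k)^T\g_t^k.
\end{align*}
Next I would rewrite the inner product using the definition of the Newton direction: $(\p_t^k)^T\g_t^k=(\g_t^k)^T(\H_t^k)^{-1}\g_t^k$. By Lemma~\ref{lemma:subsampling_guarantee1}, with probability at least $1-\delta$ (and using that $s\ge\frac{4B}{\kappa\epsilon^2}\log\frac{2d}{\delta}$), we have $\H_t^k\preccurlyeq(1+\epsilon)M\,\I$, hence $(\H_t^k)^{-1}\succcurlyeq\frac{1}{(1+\epsilon)M}\I$, so that
\begin{align*}
(\g_t^k)^T(\H_t^k)^{-1}\g_t^k\ge\frac{1}{(1+\epsilon)M}\|\g_t^k\|^2.
\end{align*}
Then I would lower-bound the step-size: the constraint $\alpha_t^k\le\as$ together with the choice \eqref{as_condition} of $\as$ ensures that the Armijo condition is actually satisfied at $\alpha=\as$, so the line search returns $\alpha_t^k=\as$ (or at any rate $\alpha_t^k\ge\as$ is not what we want — rather the maximizer is exactly $\as$); the relevant point is $\alpha_t^k\ge\as$ is false, so I should instead argue $\alpha_t^k=\as$ by showing the sufficient-decrease inequality holds for all $\alpha\le\as$ under the smoothness/strong-convexity bounds, a standard computation using the quadratic upper bound $f^k(\w-\alpha\p)\le f^k(\w)-\alpha\p^T\g+\frac{(1+\epsilon)M\alpha^2}{2}\|\p\|^2$ and the curvature bounds on $\p^T\g$ and $\|\p\|^2$. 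Combining the three ingredients yields
\begin{align*}
f^k(\w_{t+1}^k)-f^k(\w_t^k)\le-\frac{\as\beta}{(1+\epsilon)M}\|\g_t^k\|^2=-\psi\|\g_t^k\|^2,
\end{align*}
which is exactly the claim.

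The main obstacle I anticipate is the step-size lower bound: one must show rigorously that $\alpha_t^k=\as$, i.e.\ that the backtracking condition \eqref{local-ss} is met at $\alpha=\as$. This requires the second-order Taylor expansion of $f^k$ along the Newton direction, an upper bound on the Hessian along the segment (from Lemma~\ref{lemma:subsampling_guarantee1}, which holds uniformly in $\w$, so it applies on the whole segment), and then verifying that the precise bound \eqref{as_condition}—with its two competing terms $\frac{(1-\beta)\kappa}{M}$ and $\frac{2\beta\kappa^2}{3M[M-\kappa/4]}$—is exactly what makes the Armijo inequality hold; the two terms presumably arise from controlling $\p^T\g$ from below via $\H_t^k\succcurlyeq(1-\epsilon)\kappa\I$ versus bounding the quadratic remainder. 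A minor care point is to apply the high-probability event of Lemma~\ref{lemma:subsampling_guarantee1} only once (uniformly over $\w$) so that the failure probability stays $\delta$ rather than accumulating, and to note the lemma here only needs assumptions 1--3 because assumption 4 was already consumed inside Lemma~\ref{lemma:subsampling_guarantee1} to guarantee the sample-size condition.
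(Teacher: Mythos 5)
Your proposal is correct and follows essentially the same route as the paper's proof: use the $(1+\epsilon)M$ upper bound on $\H_t^k$ from the subsampling lemma to verify that the Armijo condition \eqref{local-ss} holds at $\alpha=\as$ (so the line search returns $\alpha_t^k=\as$), then combine the line-search decrease guarantee with $(\p_t^k)^T\g_t^k=(\g_t^k)^T(\H_t^k)^{-1}\g_t^k\ge\frac{1}{(1+\epsilon)M}\|\g_t^k\|^2$ to conclude. The ``standard computation'' you defer is exactly the one the paper carries out (quadratic upper bound plus $\p^T\H_t^k\p\ge(1-\epsilon)\kappa\|\p\|^2$, yielding the admissible range $\alpha\le\frac{2(1-\beta)\kappa(1-\epsilon)}{M(1+\epsilon)}$, compatible with the first term of \eqref{as_condition}).
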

\begin{proof}
See Appendix \ref{app:aux_lemmas}.
\end{proof}

We next use the result in Lemma \ref{lemma:local_linear_convergence1} to prove linear convergence for the global function $f(\cdot)$. We first prove guarantees for the $L=1$ case, where the models are communicated every iteration but the gradient is computed locally instead of globally contrary to previous results~\citep{giant_nips} (thus reducing two communication rounds per iteration). We then extend it to the general case of $L>1$ and show that the updates converge at a sublinear rate in that case.

\begin{theorem}
\label{THM:L1}
[$L=1$ case]
Suppose Assumptions 1-5 hold and the step-size $\alpha_t^k$ satisfies the line-search condition~\eqref{local-ss}. Also, let $0<\delta<1$,  $0<\epsilon, \epsilon_1<1/2$ be fixed constants and let $ \Gamma = \max_{1\leq i \leq n} \|\nabla f_i(.)\|$. Moreover, assume that the  sample size for each worker satisfies  $s \geq \frac{4B}{\kappa\epsilon^2}\log\frac{2dK}{\delta}$, where the samples are chosen without replacement. Then, with the LocalNewton updates, $\{ \bw_{t} \}_{t \geq 0}$, from Algorithm~\ref{algo:localnewton} and $L=1$, we obtain
\begin{enumerate}
    \item  If $s \gtrsim  \frac{\Gamma^2}{\epsilon_1^2 G^2} \log (d/\delta)$ for $G=\min_k\|g^k(\bw_t)\|$,
    we get with probability at least $1-6K\delta$,
    \begin{align*}
f(\bw_{t+1}) - f(\w^*) \leq \rho_1(f(\bw_t) - f(\w^*)).
\end{align*}
\item We obtain, with probability at least $1-6K\delta$,
\begin{align*}
f(\bw_{t+1}) - f(\w^*)
\leq \rho_2(f(\bw_t) - f(\w^*))  + \eta\cdot \frac{\Gamma}{\kappa(1-\epsilon)},
\end{align*}
where $\eta = \frac1{\sqrt s}\,\Gamma(1+ \sqrt{2\log (\frac{1}{\delta})}) $.
\end{enumerate}
Here $\rho_i = (1 - 2\kappa C_i)$, for $i=\{1,2\}$, $C_1 = \frac{(1-\epsilon)\psi}{2} - \frac{\epsilon_1}{\kappa(1-\epsilon)} $, $C_2 = \frac{\psi(1-\epsilon)}{2}$, and $\psi = \frac{\as\beta}{M(1+\epsilon)}$. 
\end{theorem}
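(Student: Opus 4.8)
When $L=1$ every iteration is a synchronization step, so the master iterate obeys the clean recursion $\bw_{t+1}=\bw_t-\bp_t$, with $\bp_t=\frac1K\su\p_t^k$ and $\p_t^k=\alpha_t^k\H^k(\bw_t)^{-1}\g^k(\bw_t)$; in particular no consensus error between $\bw_t$ and the $\w_t^k$ ever has to be tracked. Moreover, since $\S_1,\dots,\S_K$ partition $[n]$ and $Ks=n$, the average of the local gradients equals the global gradient exactly: $\frac1K\su\g^k(\bw_t)=\nabla f(\bw_t)=\g_t$. The plan has three parts: (a) establish a one-step \emph{global descent inequality} of the form $f(\bw_{t+1})-f(\bw_t)\le -C\,\|\g_t\|^2+(\text{sampling error})$; (b) convert it to a contraction of the suboptimality gap using $\kappa$-strong convexity of $f$, which gives $\|\g_t\|^2\ge 2\kappa\big(f(\bw_t)-f(\w^*)\big)$, so that $-C\|\g_t\|^2\le -2\kappa C\big(f(\bw_t)-f(\w^*)\big)$ and hence $\rho_i=1-2\kappa C_i$; (c) separate the two stated regimes.

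\noindent For (a), I would condition on an event of probability at least $1-6K\delta$ (a union bound over the $K$ workers and the finitely many concentration events each) on which, for every $k$: (i) Lemma~\ref{lemma:subsampling_guarantee1} gives $\frac1{(1+\epsilon)M}\I\preccurlyeq\H^k(\bw_t)^{-1}\preccurlyeq\frac1{(1-\epsilon)\kappa}\I$ (the term $\log\frac{2dK}{\delta}$ in the sample-size requirement absorbs the union over workers); (ii) a vector Hoeffding/Bernstein bound for sampling without replacement gives $\|\g^k(\bw_t)-\g_t\|\le\eta:=\frac1{\sqrt s}\Gamma\big(1+\sqrt{2\log(1/\delta)}\big)$, using $\|\nabla f_i(\cdot)\|\le\Gamma$ — this is the origin of the $1/\sqrt s$ error floor; and (iii) Lemma~\ref{lemma:local_linear_convergence1} gives $f^k(\w_{t+1}^k)-f^k(\bw_t)\le -\psi\,\|\g^k(\bw_t)\|^2$ with $\psi=\frac{\as\beta}{M(1+\epsilon)}$, together with $\alpha_t^k\le\as$ and $\|\p_t^k\|\le\frac{\as}{(1-\epsilon)\kappa}\|\g^k(\bw_t)\|$. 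Starting from $f(\bw_{t+1})\le\frac1K\su f(\w_{t+1}^k)$ (convexity of $f$), I would write $f(\w_{t+1}^k)=f^k(\w_{t+1}^k)+\big(f-f^k\big)(\w_{t+1}^k)$, bound the first summand via (iii) and use $\frac1K\su f^k(\bw_t)=f(\bw_t)$, and Taylor-expand $\big(f-f^k\big)$ about $\bw_t$: averaged over $k$ the zeroth-order terms vanish, the first-order term is $\le\|\g_t-\g^k(\bw_t)\|\,\|\p_t^k\|\le\frac{\as\eta}{(1-\epsilon)\kappa}\|\g^k(\bw_t)\|$ by (i)–(ii), and the second-order term is $O(\|\p_t^k\|^2)$ and is dominated by $-\psi\|\g^k(\bw_t)\|^2$ thanks to the smallness of $\as$ enforced by \eqref{as_condition} (together with the matrix-Chernoff estimate $\|\H^k(\bw_t)-\nabla^2 f(\bw_t)\|_2\lesssim\epsilon M$ coming out of the proof of Lemma~\ref{lemma:subsampling_guarantee1}). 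Finally, $\frac1K\su\|\g^k(\bw_t)\|^2\ge\|\frac1K\su\g^k(\bw_t)\|^2=\|\g_t\|^2$ (Jensen) and $\|\g^k(\bw_t)\|\le\|\g_t\|+\eta$ give an inequality of the form $f(\bw_{t+1})-f(\bw_t)\le -\frac{(1-\epsilon)\psi}{2}\|\g_t\|^2+\frac{\eta}{\kappa(1-\epsilon)}\|\g_t\|$, up to the bookkeeping that pins down the precise constants. (Equivalently one can run the same three ingredients starting from $M$-smoothness of $f$, $f(\bw_{t+1})\le f(\bw_t)-\langle\g_t,\bp_t\rangle+\frac M2\|\bp_t\|^2$.)

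\noindent For (c): in Part~2 I bound $\|\g_t\|\le\Gamma$ in the cross-term and apply strong convexity, obtaining $f(\bw_{t+1})-f(\w^*)\le\rho_2\big(f(\bw_t)-f(\w^*)\big)+\eta\cdot\frac{\Gamma}{\kappa(1-\epsilon)}$ with $C_2=\frac{\psi(1-\epsilon)}{2}$. In Part~1 the extra hypothesis $s\gtrsim\frac{\Gamma^2}{\epsilon_1^2G^2}\log(d/\delta)$ forces $\eta\lesssim\epsilon_1 G\le\epsilon_1\|\g^k(\bw_t)\|$, so the cross-term is itself of order $\frac{\epsilon_1}{\kappa(1-\epsilon)}\|\g_t\|^2$ and can be absorbed into the quadratic term, shrinking its coefficient to $C_1=\frac{(1-\epsilon)\psi}{2}-\frac{\epsilon_1}{\kappa(1-\epsilon)}$ and removing the error floor, which is exactly the claimed $\rho_1$.

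\noindent The main obstacle is ingredient (ii): the bound $\|\g^k(\bw_t)-\g_t\|\le\eta$ has to hold at the \emph{random} point $\bw_t$, which is itself a function of the partition $\S_1,\dots,\S_K$, so a naive ``fixed-$\w$'' application of a concentration inequality is not legitimate. I would handle this by making the bound uniform in $\w$ — legitimate because each $\nabla f_i$ is globally bounded by $\Gamma$ and Lipschitz by Assumption~4, so a single high-probability event (via a net/covering argument) controls $\|\g^k(\w)-\nabla f(\w)\|$ over all relevant iterates; this uniformization is where the matrix/vector-concentration machinery does the real work. The remaining difficulty is purely mechanical: keeping the cross-terms produced by the substitution $\g^k(\bw_t)=\g_t+(\g^k(\bw_t)-\g_t)$ organized so that they collapse to exactly $C_1,C_2$ and the floor $\eta\,\Gamma/(\kappa(1-\epsilon))$ in the statement.
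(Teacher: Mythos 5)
Your plan is correct and is essentially the paper's own argument: a one-step descent bound obtained from $M$-smoothness of $f$ together with local $\kappa(1-\epsilon)$-strong convexity, Lemma~\ref{lemma:local_linear_convergence1}, and the local-vs-global gradient concentration (relative error $\epsilon_1\|\g^k(\bw_t)\|$ under the lower bound $G$ for Part~1, absolute $\eta=O(1/\sqrt{s})$ floor via McDiarmid for Part~2), followed by the strong-convexity conversion $\|\g(\bw_t)\|^2\geq 2\kappa\bigl(f(\bw_t)-f(\w^*)\bigr)$ yielding $\rho_i=1-2\kappa C_i$, with the same constants $C_1$, $C_2$, $\psi$. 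The measurability/uniformity issue you flag (concentration at the data-dependent iterate $\bw_t$) is a real subtlety that the paper's appendix also treats only via fixed-point concentration plus union bounds, so your proposed net argument is, if anything, more careful than the published proof.
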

\begin{proof}
The proof is presented in Appendix \ref{app:thm1}. Here, we provide a sketch of the proof.
\begin{enumerate}
\itemsep0em 
    \item  
    Due to the uniform sampling guarantee from Lemma \ref{lemma:subsampling_guarantee1}, the strong-convexity and smoothness of the global function $f(\cdot)$ implies that the local function at the $k$-th worker, $f^k(\cdot)$, also satisfy similar properties. Using this, we can lower bound $f(\bw_t) - f(\bw_{t+1})$ in terms of $\frac{1}{K}\su f^k(\w_t^k) - f^k(\w_{t+1}^k)$. 
    \item 
    Apply Lemma \ref{lemma:local_linear_convergence1} (that is, the result for standard Newton step) which says $f^k(\w_t^k) - f^k(\w_{t+1}^k) \geq \psi ||\g_t^k||^2~\forall~k\in[K]$.
    \item 
    Using uniform sketching argument, local gradients $g^k(\bw_t)$ are close to global gradient $g(\bw_t)$.
\end{enumerate}
\end{proof}

Some remarks regarding the convergence guarantee in Theorem \ref{THM:L1} are in order.
\begin{remark}
The above theorem implies that for $L=1$, the convergence rate of \ln~  is linear with high probability. Choosing $\delta = 1/\mathsf{poly}(K)$, we obtain the high probability as $1-1/\mathsf{poly}(K)$.
\end{remark}
\begin{remark}
We have two different settings in the above theorem. 
The first setting implies that provided the local gradients $\{g^k(\bw_t)\}_{k=1}^K$ are large enough, and the amount of local data $s$ is reasonably large, then the convergence is \emph{purely} linear and does not suffer an error floor. 
This will typically happen in the earlier iterations of \ln. 
Note that the gradients vanish as we get closer to the optimum, which is why the setting will eventually be violated. If this happens, we move to the next setting.
\end{remark}
\begin{remark}
The second setting implies that, if the gradient condition and the restriction of $s$ are violated, although the convergence rate of \ln~is still linear, the algorithm incurs an error floor. However, in this setting, the error floor is $\mathcal{O}(1/\sqrt{s})$, and hence it is quite small for sufficiently large sample-size, $s$, at each worker.
\end{remark}

To understand the linear convergence rate of \ln, we consider the following example.
Assume all the workers initialize at $\bw_0$ and run \ln~with $L=1$ for $T$ iterations and the first setting is true (that is, $\|\g^k(\w_t)\| \geq G$ for all $k\in [K]$ and $t \in [T]$). 
 Then, from Theorem \ref{THM:L1}, to reach within $\xi$ of the optimal function value (that is, $f(\bw_{T}) - f(\bw^*) \leq \xi$), the number of iterations $T$ is upper bounded by 
$$T \leq \left(\log\frac{1}{\rho_1}\right)\log\frac{\xi}{f(\bw_0) - f(\bw^*)}$$ 
with probability $1-6K\delta$ for a sample size $s \geq \max\left\{ \frac{4B}{\kappa\epsilon^2}\log\frac{2dKT}{\delta}, \frac{\Gamma^2}{\epsilon_1^2 G^2} \log \frac{dT}{\delta}\right\}$. 
(Note the increase in sample size $s$ by a factor of $T$ in the $\log(\cdot)$ due to a union bound). The fully synchronized second order method GIANT \citep{giant_nips} also has similar linear quadratic convergence but it assumes that the gradients are synchronized in every iteration.
We  remove this assumption by tracking how far the iterate deviates when the gradients are computed locally, thereby cutting the communication costs in half while still showing linear convergence (within some error floor in the most general~case). 

We now prove convergence guarantees for the case when $L>1$ in this algorithm. 

\begin{theorem}[$L \geq 1$ case]\label{THM:LGT1}
Suppose Assumptions 1-4 hold and step-size $\alpha_t^k$ solves the line-search condition~\eqref{local-ss}. Also, let $0<\delta<1$, $0<\epsilon<1/2$ be fixed constants and let $ \Gamma = \max_{1\leq i \leq n} \|\nabla f_i(.)\|$. Moreover, assume that the  sample size for each worker satisfies  $s \geq \frac{4B}{\kappa\epsilon^2}\log\frac{2dK}{\delta}$, where the samples are chosen without replacement. Then, the LocalNewton updates, $\{ \bw_{t} \}_{t \geq 0}$, from Algorithm~\ref{algo:localnewton} and $L \geq 1$, with probability at least $1-6LK\delta$, satisfy
\begin{align*}
 f(\bw_{t+1}) - f(\bw_{t_0})  
 &\leq  -C\sum_{\tau = t_0}^t \left(\frac{1}{K}\su \|\g_\tau^k\|^2\right) 
 +  \eta\cdot \frac{L \Gamma}{\kappa(1-\epsilon)},
  \end{align*}
where  $\eta = \frac1{\sqrt s}\,\Gamma(1+ \sqrt{2\log (\frac{1}{\delta})})$, $C = \psi - \frac{(M - \kappa(1-\epsilon)^2)}{2K\kappa^2(1-\epsilon)^2} $. 
where $t_0$ is the last iteration where the models were synced, $\psi = \frac{\as\beta}{M(1+\epsilon)}$, and $C = \frac{\psi(1-\epsilon)^3}{2}.$
\end{theorem}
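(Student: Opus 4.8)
The plan is a perturbed-iterate argument comparing the virtual averaged sequence $\{\bw_\tau\}$ with the genuinely computed local sequences $\{\w_\tau^k\}$, telescoped over the window between the last synchronization time $t_0$ and $t+1$. Two elementary facts are used at the outset: because each worker holds exactly $s=n/K$ points and the $\S_k$ partition $[n]$, we have $f(\w)=\frac1K\su f^k(\w)$ for every $\w$; and because the models were last synced at $t_0$, we have $\w_{t_0}^k=\bw_{t_0}$ for all $k$. Hence I would begin from the decomposition
\begin{align*}
f(\bw_{t+1})-f(\bw_{t_0}) = \frac1K\su\big(f^k(\bw_{t+1})-f^k(\w_{t+1}^k)\big) \;+\; \frac1K\su\big(f^k(\w_{t+1}^k)-f^k(\w_{t_0}^k)\big),
\end{align*}
in which the second group is the genuine local progress over the window and the first group is the price paid for the local iterates having drifted apart; it is this first group that will produce the $\mathcal O(1/\sqrt s)$ floor.

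For the local-progress group I would telescope over the at most $L$ local iterations $\tau=t_0,\dots,t$ and apply Lemma~\ref{lemma:local_linear_convergence1} at each step (its hypothesis that $\H^k(\cdot)$ is well conditioned is supplied, simultaneously for all iterates, by Lemma~\ref{lemma:subsampling_guarantee1}), giving $f^k(\w_{t+1}^k)-f^k(\w_{t_0}^k)\le-\psi\sut\|\g_\tau^k\|^2$ and, after averaging over $k$, a term of the form $-C\sut\big(\frac1K\su\|\g_\tau^k\|^2\big)$; carrying the constants through the smoothness / strong-convexity interplay (and through the averaging of the $K$ directions in the quadratic remainder) yields the stated value of $C$. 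For the drift group I would use convexity of each $f^k$ to write $f^k(\bw_{t+1})-f^k(\w_{t+1}^k)\le\langle\nabla f^k(\bw_{t+1}),\,\bw_{t+1}-\w_{t+1}^k\rangle$, then subtract the common vector $\nabla f(\bw_{t+1})$ and exploit $\frac1K\su(\bw_{t+1}-\w_{t+1}^k)=0$ to reduce it to $\frac1K\su\langle\nabla f^k(\bw_{t+1})-\nabla f(\bw_{t+1}),\,\bw_{t+1}-\w_{t+1}^k\rangle$. Two ingredients finish this: (i) the gradient-deviation bound $\|\nabla f^k(\bw_{t+1})-\nabla f(\bw_{t+1})\|\le\eta$, from a Hoeffding--Serfling / ``uniform sketching'' concentration inequality for sampling without replacement (each $\nabla f_j$ having norm $\le\Gamma$); and (ii) the drift bound $\|\bw_{t+1}-\w_{t+1}^k\|=\|\sut(\p_\tau^k-\bp_\tau)\|\le\sut(\|\p_\tau^k\|+\|\bp_\tau\|)$ together with $\|\p_\tau^k\|\le\frac{\as}{(1-\epsilon)\kappa}\|\g_\tau^k\|\le\frac{\Gamma}{(1-\epsilon)\kappa}$ (using $\as\le1$), so the drift accumulated over $\le L$ steps is $\mathcal O\!\big(\tfrac{L\Gamma}{(1-\epsilon)\kappa}\big)$. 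Multiplying (i) and (ii) and summing over $k$ gives the floor $\eta\cdot\frac{L\Gamma}{\kappa(1-\epsilon)}$.

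Finally I would assemble the probability budget: Lemma~\ref{lemma:subsampling_guarantee1} is uniform in $\w$ and so costs one event per worker, while the gradient-deviation inequality is invoked at the $\le L$ iterates of each worker, and with the handful of concentration events needed per step this union-bounds to the claimed $1-6LK\delta$ (the extra factor $K$ inside the $\log$ of the sample-size requirement is exactly this union bound over workers). The step I expect to be the main obstacle is the gradient-deviation bound: the point $\bw_{t+1}$ at which it must hold is itself a function of the random partition $\{\S_k\}$, so the pointwise Hoeffding--Serfling estimate cannot be applied verbatim; making $\|\g^k(\cdot)-\g(\cdot)\|\le\eta$ valid at the data-dependent iterate---either by a covering/net argument that promotes it to a uniform bound on the relevant ball, in the same spirit as Lemma~\ref{lemma:subsampling_guarantee1} handles the Hessian, or by an explicit decoupling of the iterate from the sample---is the delicate part, and is also where any additional logarithmic factors in $s$ would enter. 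A secondary nuisance is keeping $C>0$: the smoothness remainder term must be dominated by the linear descent term, which constrains $\epsilon$ and the conditioning $M/\kappa$ to be small enough, precisely as reflected in the definition of $C$ and in the admissible range of $\as$ in \eqref{as_condition}.
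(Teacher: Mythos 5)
Your argument is correct in substance but follows a genuinely different route from the paper's. The paper anchors everything at the last synchronization point $\bw_{t_0}$: it applies $M$-smoothness of the global $f$ to the accumulated averaged step $\sum_{\tau=t_0}^t\bp_\tau$ (perturbed iterate form), pairs that with $\kappa(1-\epsilon)$-strong convexity of each local $f^k$ over the same window, and uses the local-versus-global gradient concentration at $\bw_{t_0}$, in the direction of each local step $\alpha_\tau^k\p_\tau^k$, to trade $\langle\g(\bw_{t_0}),\alpha_\tau^k\p_\tau^k\rangle$ for $\langle\g^k(\bw_{t_0}),\alpha_\tau^k\p_\tau^k\rangle$ before invoking Lemma~\ref{lemma:local_linear_convergence1}; the leftover quadratic terms are what produce the correction $\frac{M-\kappa(1-\epsilon)}{2K\kappa^2(1-\epsilon)^2}$ subtracted from $\psi$ in $C$. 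You instead use the exact identity $f=\frac1K\su f^k$ at the endpoint $\bw_{t+1}$, splitting into telescoped local progress (Lemma~\ref{lemma:local_linear_convergence1} over the window, which is also how the paper extracts its descent term) plus an averaging-drift term that you control by convexity of $f^k$, the zero-mean trick $\frac1K\su(\bw_{t+1}-\w_{t+1}^k)=0$, the gradient deviation bound at $\bw_{t+1}$, and an explicit bound on the parameter drift $\|\bw_{t+1}-\w_{t+1}^k\|\lesssim L\Gamma/(\kappa(1-\epsilon))$. What your route buys is the elimination of the smoothness remainder: you would obtain $C=\psi$ rather than the paper's corrected constant (the theorem statement itself lists two inconsistent values of $C$, so matching it exactly is not the issue), at the mild cost of roughly a factor $2$ in the $\mathcal O(\eta L\Gamma/(\kappa(1-\epsilon)))$ floor coming from $\|\p_\tau^k\|+\|\bp_\tau\|$. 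One honest caveat: the difficulty you flag—that the pointwise concentration $\|\g^k(\cdot)-\g(\cdot)\|\le\eta$ is invoked at a data-dependent iterate—is not resolved by the paper either; its proof applies the same per-point McDiarmid bound at the random point $\bw_{t_0}$ with the random direction $\alpha_\tau^k\p_\tau^k$, so your proposal is, if anything, more explicit about this limitation than the published argument.
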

\begin{proof}
The proof is presented in Appendix \ref{app:thm2}.
\end{proof}
\begin{remark}
The theorem shows that \ln\ with high probability produces a descent direction, provided that the error floor is sufficiently small, i.e. for sufficiently large $s$ (since $\eta$ is proportional to $1/\sqrt{s}$). 
Observe that the convergence rate here is no longer linear. In other words, we are trading-off the rate of convergence for local iterations ($L>1$).
\end{remark}
\begin{remark}
Choosing $\delta = 1/\mathsf{poly}(K,L)$, we get that the theorem holds with probability at least $1-1/\mathsf{poly}(K,L)$. Note that this is not restrictive since the dependence on $\delta$ is logarithmic. 
\end{remark}

While the theoretical guarantees for $L>1$ 
in Theorem~\ref{THM:LGT1} are not as strong as those for $L=1$ in Theorem~\ref{THM:L1} (linear versus sublinear convergence), empirically we observe a fast rate
of convergence even when $L>1$ (see Section \ref{sec:experiments} and Appendix \ref{app:additional_exps}
for empirical results). Nevertheless, to the best of our knowledge, 
Theorem~\ref{THM:LGT1} is the first to show a descent guarantee for a distributed second-order method without synchronizing at every iteration.
Obtaining a better rate of convergence for general $L$, with or without error floor,
is an interesting and relevant future research direction.

\section{Empirical Evaluation}\label{sec:experiments}

In this section, we present an empirical evaluation of our approach when solving a large-scale logistic regression problem. 
We ran our experiments on AWS Lambda workers using the PyWren \citep{pywren} framework. AWS Lambda is a \emph{serverless} computing platform which uses a high-latency cloud storage (AWS S3) to exchange data with the workers. The master is a \emph{serverful} AWS EC2 machine of type {\tt m4.4xlarge} which co-ordinates with the serverless workers and holds the central model during the entire training run. 
We ran experiments on the real-world datasets described in Table \ref{table:datasets} (obtained from LIBSVM \citep{libsvm}).

 \begin{table}[ht] 
 \centering 
 \begin{tabular}{cccc} 
 \toprule
\multicolumn{1}{c}{Dataset}  &
\makecell{Training \\ samples ($n$)} &
\makecell{Features \\ ($d$)} &
\makecell{Testing \\ samples} \\
 \hline  
 w8a & $48,000$ & $300$ & $15,000$ \\ 
 Covtype & $500,000$ & $2916$ & $81,000$  \\ 
 EPSILON & $400,000$ & $2000$ & $100,000$ \\ 
 a9a & $32,000$ & $123$ & $16,000$ \\ 
 ijcnn1 & $49,000$ & $22$ & $91,000$   \\ 
\bottomrule
 \end{tabular} 
  \caption{ Datasets considered for experiments in this paper} 
  \label{table:datasets} 
 \end{table}

 \begin{figure*}[ht]
\centering
\begin{subfigure}{.31\textwidth}
      \centering
    \includegraphics[scale=0.45]{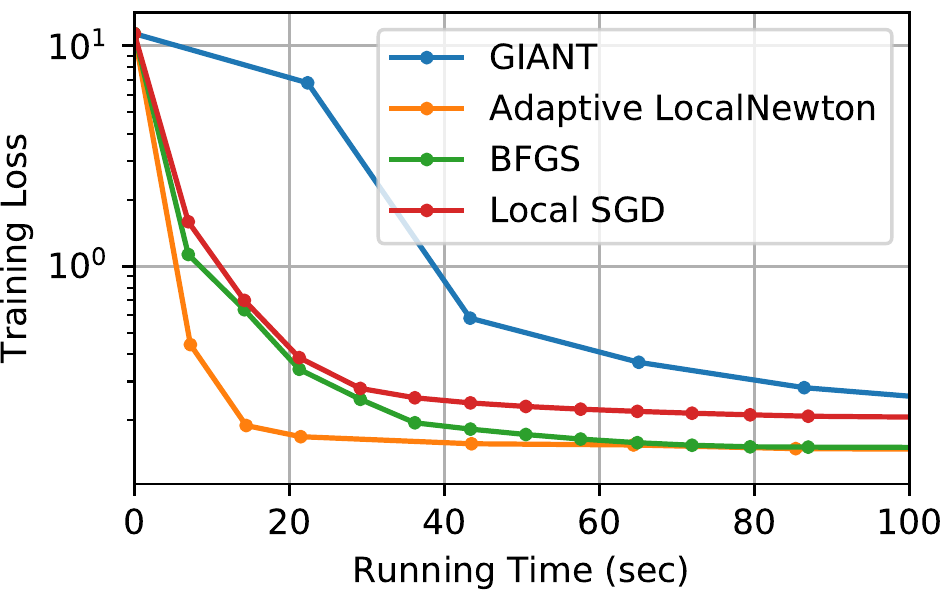}
    \end{subfigure}
    ~
    \begin{subfigure}{.31\textwidth}
        \centering
        \includegraphics[scale=0.45]{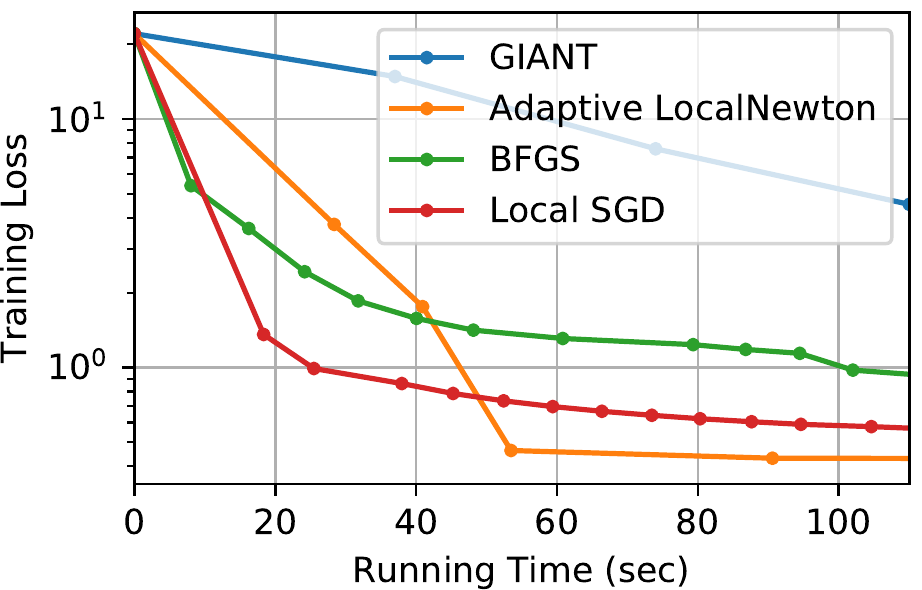}
    \end{subfigure}
    ~
    \begin{subfigure}{.31\textwidth}
        \centering
        \includegraphics[scale=0.45]{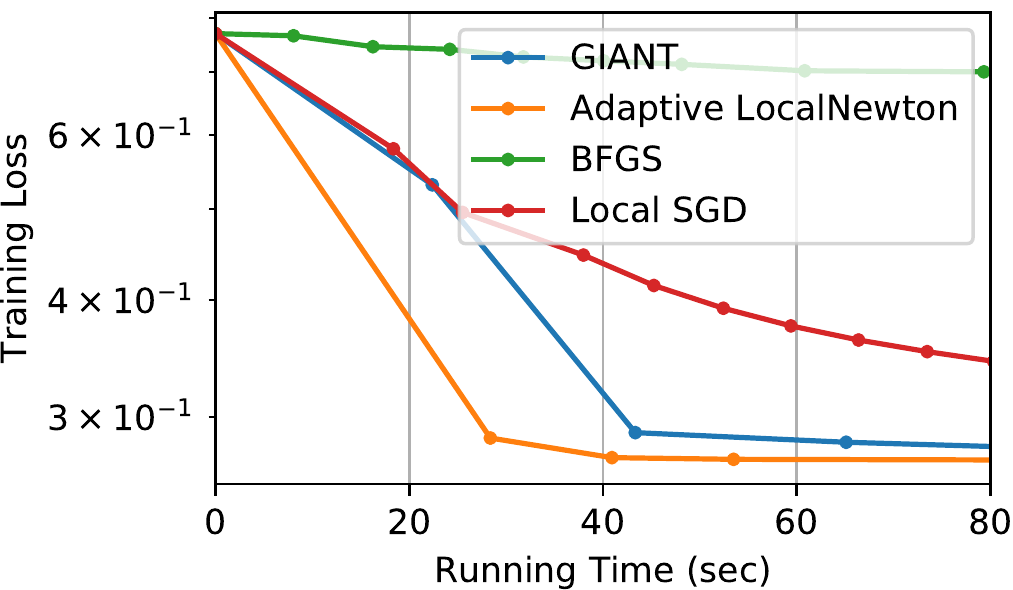}
    \end{subfigure}

    \begin{subfigure}{.31\textwidth}
        \centering
        \includegraphics[scale=0.45]{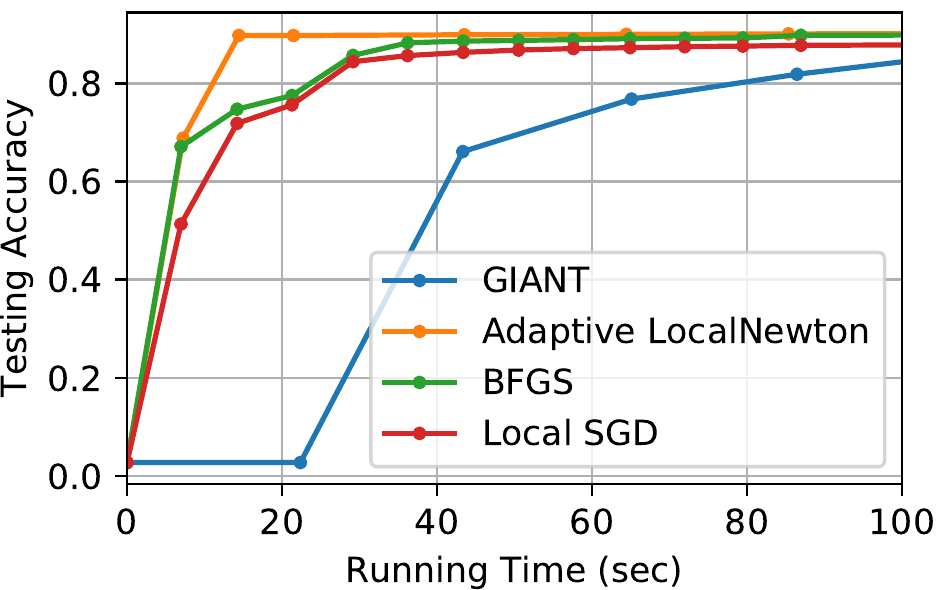}
        \caption{w8a dataset}
    \end{subfigure}
    ~
    \begin{subfigure}{.31\textwidth}
      \centering
    \includegraphics[scale=0.45]{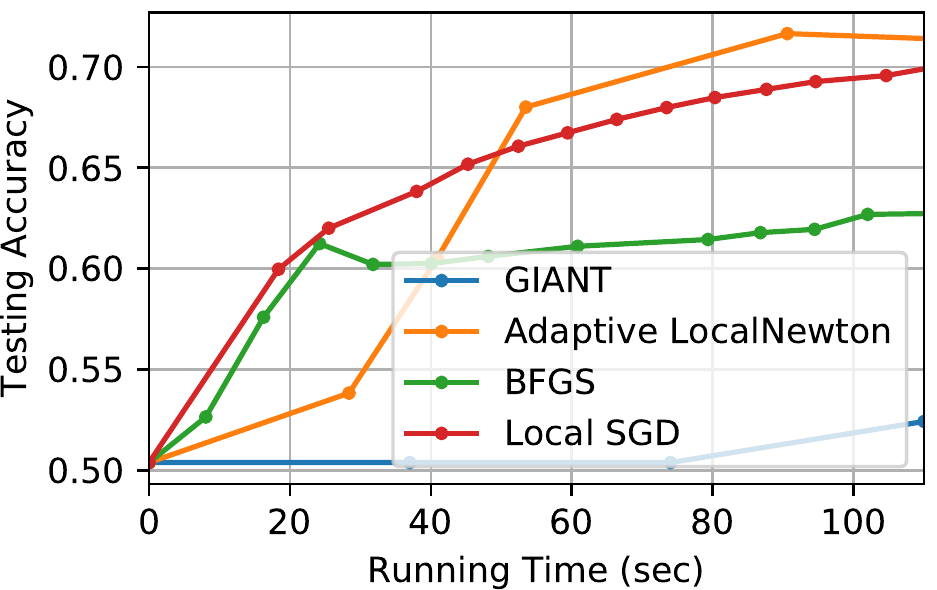}
    \caption{Covtype dataset}
    \end{subfigure}
    \hspace{4mm}
    \begin{subfigure}{.31\textwidth}
      \centering
    \includegraphics[scale=0.45]{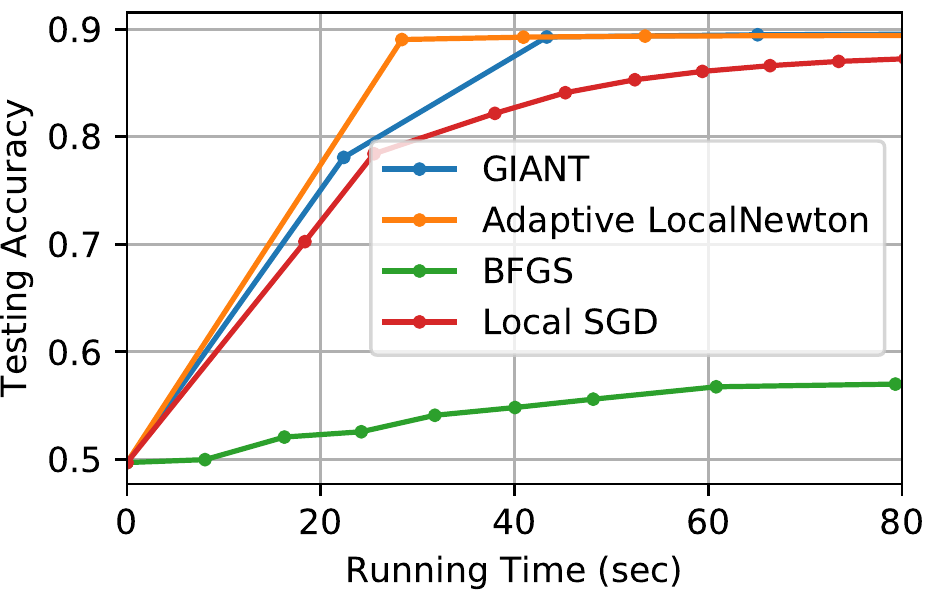}
    \caption{EPSILON dataset}
    \end{subfigure}    
    
    \begin{subfigure}{.46\textwidth}
        \centering
        \includegraphics[scale=0.45]{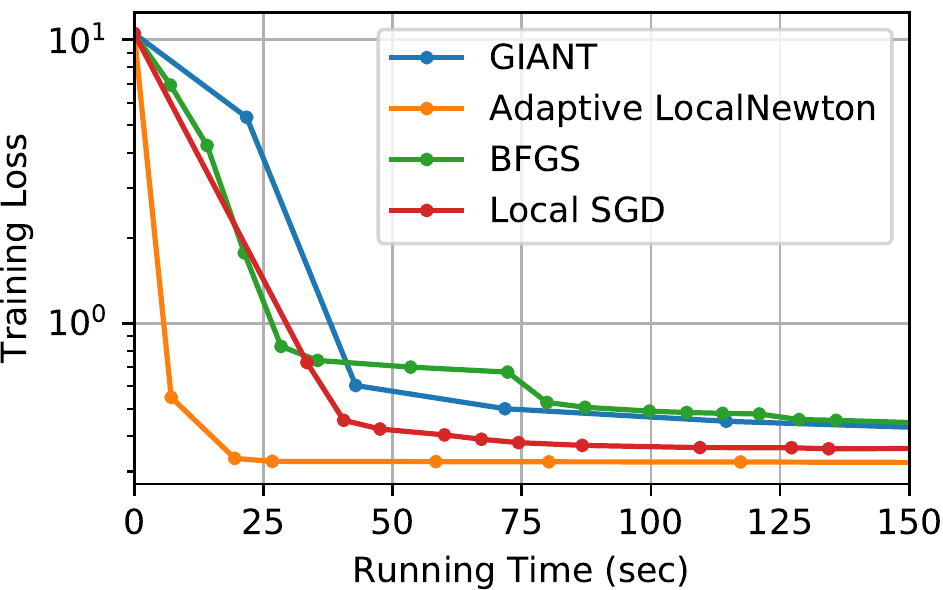}
    \end{subfigure}
    ~
\begin{subfigure}{.46\textwidth}
        \centering
        \includegraphics[scale=0.45]{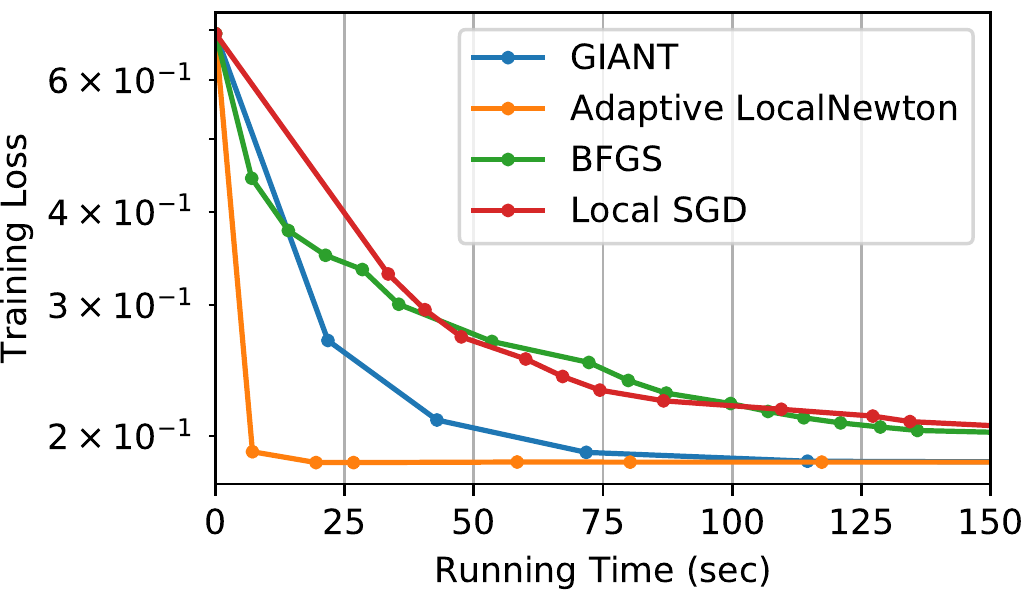}
    \end{subfigure}
    
    \begin{subfigure}{.46\textwidth}
      \centering
    \includegraphics[scale=0.45]{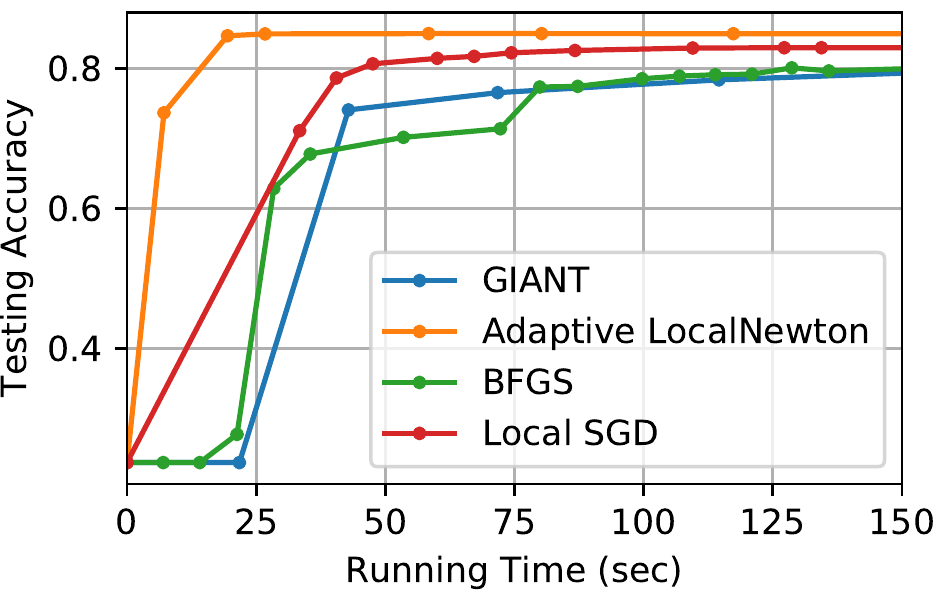}
    \caption{a9a dataset}
    \end{subfigure}
\hspace{3mm}
    \begin{subfigure}{.46\textwidth}
      \centering
    \includegraphics[scale=0.45]{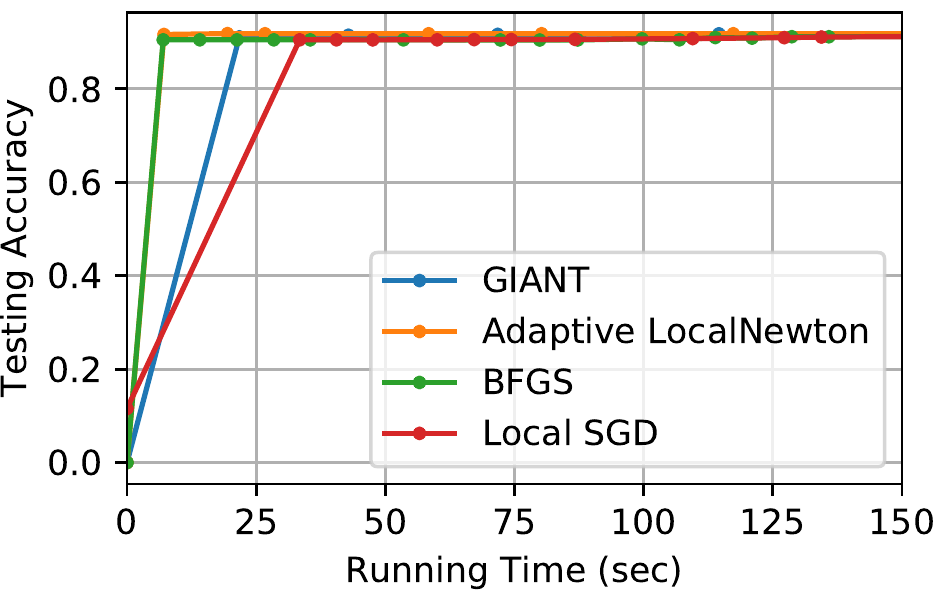}
    \caption{ijcnn1 dataset}
    \end{subfigure}
    
    \caption{Experiments on the the different datasets from Table \ref{table:datasets} on AWS Lambda. Both in terms of training loss and testing accuracy, Adaptive \ln~converges to the optimal value at least $50\%$ faster than existing schemes.}
    \label{fig:serverless_times_main}
\end{figure*}

\begin{figure*}[ht]
\centering
    \begin{subfigure}{.31\textwidth}
        \centering
        \includegraphics[scale=0.45]{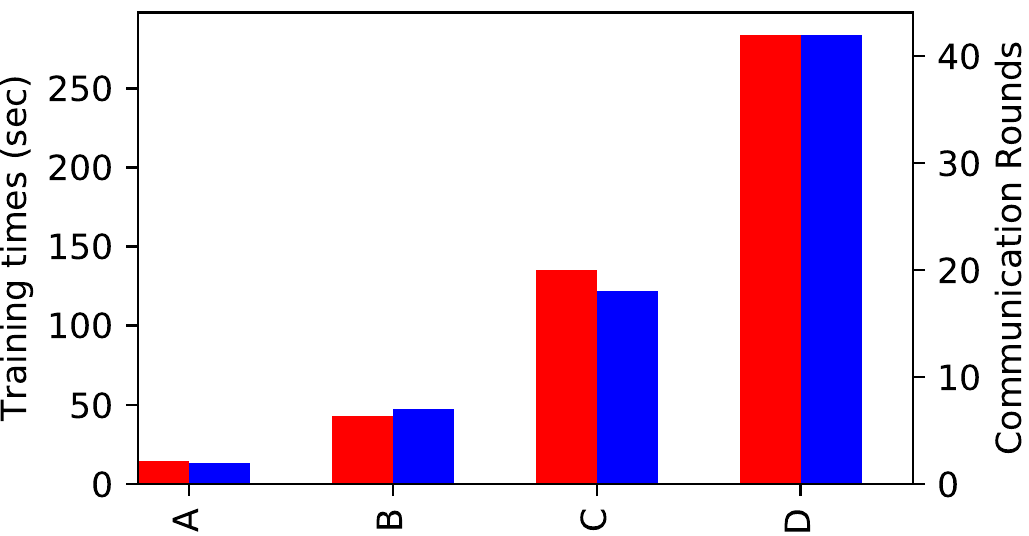}
        \caption{w8a dataset}
    \end{subfigure}
    ~
    \begin{subfigure}{.31\textwidth}
      \centering
    \includegraphics[scale=0.45]{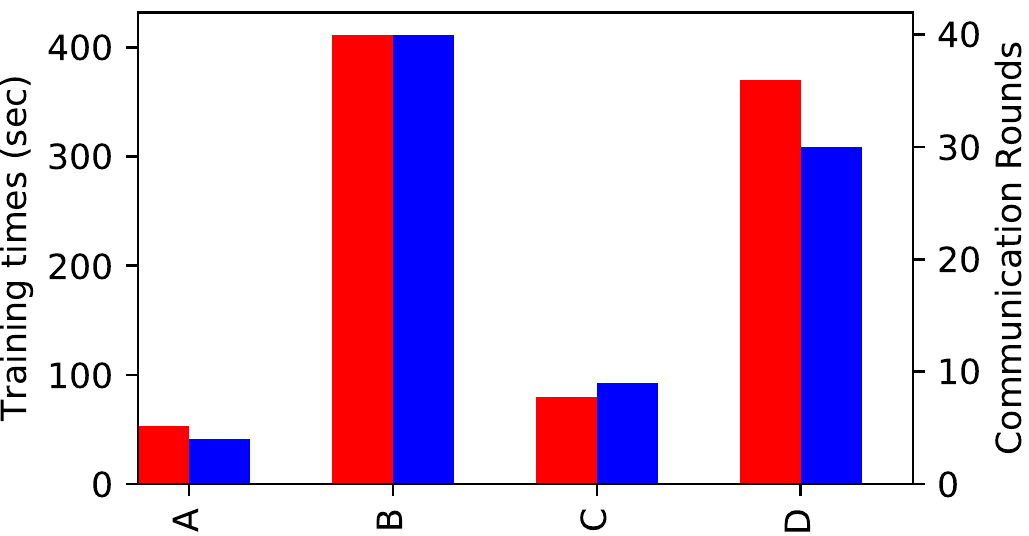}
    \caption{Covtype dataset}
    \end{subfigure}
    ~
    \begin{subfigure}{.31\textwidth}
      \centering
    \includegraphics[scale=0.45]{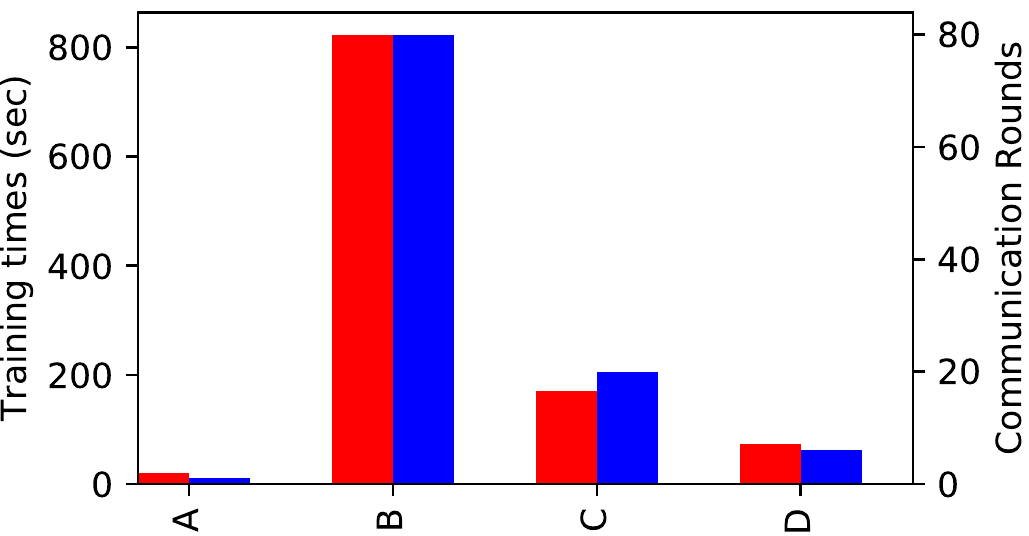}
    \caption{EPSILON dataset}
    \end{subfigure}    
    \caption{Training times (red bars) and communication rounds (blue bars) required to reach the same training loss of $0.19, 0.65$ and $0.3$ for the w8a, Covtype and EPSILON datasets, respectively, on AWS Lambda. Here, A: Adaptive \ln, B: BFGS, C: Local SGD, D: GIANT.}
    \label{fig:bar_plots}
\end{figure*}

We compare the following distributed optimization schemes for the above datasets:
\begin{enumerate}
    \item Local SGD \citep{stich2018local}: The workers communicate their models once every epoch, where training on one epoch implies applying SGD (with mini-batch size one) over one pass of the dataset stored locally at the worker. The best step-size was obtained through hyperparameter tuning (Table \ref{table:hyps}). 
\item BFGS \citep{bfgs_fletcher2013practical}: BFGS is a popular quasi-Newton method that estimates an approximate Hessian from the gradient information from previous iterations. The step-size was obtained using backtracking line-search.
The best step-size was obtained through hyperparameter tuning (Table \ref{table:hyps}).
\item GIANT \citep{giant_nips}: A state-of-the-art distributed second order algorithm proposed in \cite{giant_nips}. The authors show that GIANT outperforms many popular schemes such as DANE, AGD, etc.
The step-size was obtained using distributed line-search as described in \cite{giant_nips}.
\item Adaptive \ln: For all the considered datasets, Adaptive \ln~gradually reduces $L$ if the loss function stops decreasing, starting from $L=3$ in the first round of communication. In general, during the later stages of optimization, it switches to GIANT owing to its  convergence to the optimal solution when $\bw_t$ is sufficiently close to $\w^*$.
The step-size was obtained using backtracking line-search locally at each worker as described in Algorithm \ref{algo:localnewton}.
\end{enumerate}

We also implemented the distributed second-order optimization scheme from \cite{pmlr-v80-duenner18a} with moderate hyperparameter tuning, and observed that its performance is either comparable or worse than the baseline GIANT \cite{giant_nips}. Hence, for clarity, we omit those results from our plots.

For all the experiments presented in this paper, we fixed the number of workers, $K$, to be $100$. Hence, the number of samples per worker, $s = n/100$, for all datasets. The regularization parameter was chosen to be $\gamma = 1/n$.
Note that there are several other schemes--such as AGD \citep{nesterov_book}, DANE \citep{dane} and SVRG \cite{svrg}--that have been proposed in the literature for communication-efficient optimization. However, most of these schemes have been shown to be outperformed by one of Local SGD, BFGS or GIANT, and hence, we do not perform the comparison again. 
In Fig. \ref{fig:serverless_times_main}, we plot the training loss and testing accuracy for w8a, covtype\footnote{The covtype dataset has $d=54$ features and it does not perform well with logistic regression. Hence, we apply polynomial feature extension (using pairwise products) to increase the number of features to $d^2 = 2916$.}, EPSILON, a9a and ijcnn1 datasets. For all the datasets considered, Adaptive \ln~significantly outperforms its competitors in terms of time required to reach the same training loss (or testing accuracy) on AWS Lambda. Furthermore, since the number of workers, $K$, is fixed and serverless platform charges are proportional to the total CPU hours, end-to-end training time is directly proportional to the costs charged by AWS for training. This is assuming that each worker roughly takes the same amount of time per iteration.

In Fig. \ref{fig:bar_plots}, we highlight the fact that runtime savings on AWS Lambda are a direct consequence of significantly fewer rounds of communication. Specifically, to reach the same training loss, we plot the training times and communication rounds as bar plots for three datasets, and we note that savings in communication rounds results in commensurate savings on end-to-end runtimes on AWS Lambda. 
In Fig. \ref{fig:comm_rounds_all} in Appendix \ref{app:additional_exps}, we provide detailed plots for training losses and testing accuracies with respect to communication rounds for all the five datasets.

 \begin{table}[h] 
 \centering 
 \begin{tabular}{cccc} 
\multicolumn{1}{c}{\bf Dataset}  &
\multicolumn{1}{c}{\bf Samples per worker (s)} &
\multicolumn{1}{c}{\bf Local SGD} &
\multicolumn{1}{c}{\bf BFGS}  \\
 \hline  
 w8a & $480$ & $10/s$ & $100$ \\ 
 Covtype & $5000$ & $10/s$ & $1$  \\ 
 EPSILON & $4000$ & $500/s$ & $10$ \\ 
 a9a & $320$ & $10/s$ & $1$ \\ 
 ijcnn1 & $490$ & $100/s$ & $10$   \\ 
 \hline 
 \end{tabular} 
 \caption{Step-sizes obtained using tuning for Local SGD and BFGS for several datasets} 
 \label{table:hyps} 
 \end{table}
 
 {\bf Hyperparameter details}:
In Table \ref{table:hyps}, we provide the step-sizes for local SGD and BFGS that were obtained through hyperparameter tuning, where $s = n/K$, $n$ is the number of training examples in the dataset and $K=100$.

\section{Conclusion}

The practicality of second-order optimization methods has been questioned since naive ways to implement them require large compute and power storage to work with the Hessian. 
However, in the last few decades, trends such as Moore's law have made computation faster and memory cheaper, while improvements in communication costs have been at best marginal. 
These trends, combined with a flurry of efficient but approximate algorithms \citep{mert,fred1,fred2,osn}, have revived interest in second-order methods.
In this paper, we identify and concretize the role that second-order methods---combined with local optimization algorithms---can play in reducing the communication costs during distributed training, in particular in serverless environments. 
Since second-order information has recently been used to develop state-of-the-art methods for deep neural networks with extremely large model sizes \citep{hawq,yao2019pyhessian,qbert, YGSKM20_adahessian_TR}, we expect that methods such as ours will play a significant role in motivating and designing next-generation communication-efficient algorithms for fast distributed training of machine learning models.

\section*{Acknowledgments}
  This work was partially supported by NSF grants CCF-1704967 and CCF- 0939370 (Center for Science of Information) to VG; NSF Grant CCF-2007669 and CCF-1703678 to AG and KR; and ARO, DARPA, NSF, and ONR grants to MD, RK and MWM. The authors would like to additionally thank Koulik Khamaru for helpful discussions with the proof and to AWS for providing promotional cloud credits for research. 

\bibliographystyle{ieeetr}
\bibliography{bibli}

\clearpage

\appendix
\onecolumn

\section{Auxiliary Lemmas and their Proofs}
\label{app:aux_lemmas}
Here, we prove the auxiliary lemmas that are used in the main proofs of the paper. (For completeness, we restate the lemma statements).

\begin{lemma}
\label{lemma:subsampling_guarantee}
Let $f(\cdot)$ satisfy assumptions 1-4 and $0 <\epsilon \leq 1/2$ and $\delta <1$ be fixed constants. Then, if $s \geq \frac{4B}{\kappa\epsilon^2}\log\frac{2d}{\delta}$, the local Hessian at the $k$-th worker satisfies
\begin{align}
(1-\epsilon)\kappa \preccurlyeq \nabla^2 f^k(\w) = \H^k(\w) \preccurlyeq (1 + \epsilon)M,
\end{align}
for all $\w\in\R^d ~\text{and}~ k \in [K]$ with probability (w.p.) at least $1 - \delta$.
\end{lemma}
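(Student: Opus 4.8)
The plan is to prove Lemma~\ref{lemma:subsampling_guarantee} via a matrix concentration argument applied to the (scaled) local Hessian, exploiting the fact that $\H^k(\w)$ is an average of $s$ i.i.d.\ (under the sampling-without-replacement model, exchangeable and bounded) Hessian summands $\nabla^2 f_j(\w)$, each sandwiched between $\mathbf{0}$ and $B\I$ by Assumption~4. First I would fix $\w \in \R^d$ and write $\H^k(\w) = \frac1s\sum_{j\in\S_k}\nabla^2 f_j(\w)$, noting that $\E[\H^k(\w)] = \H(\w) = \nabla^2 f(\w)$ (exactly so for sampling without replacement, since each index is equally likely to appear). The goal is to control $\|\H^k(\w) - \H(\w)\|_2$.

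The key step is a matrix Bernstein / matrix Chernoff bound. Define $\bZ_j = \frac1s(\nabla^2 f_j(\w) - \H(\w))$, so $\H^k(\w) - \H(\w) = \sum_{j\in\S_k}\bZ_j$, with $\|\bZ_j\|_2 \le B/s$ (using $\mathbf{0}\preccurlyeq\nabla^2 f_j\preccurlyeq B\I$ and $\mathbf{0}\preccurlyeq\H\preccurlyeq M\I\preccurlyeq B\I$). The variance proxy satisfies $\big\|\sum_j \E[\bZ_j^2]\big\|_2 \le \frac{B}{s}\|\H(\w)\|_2 \le \frac{BM}{s}$, and more usefully $\le \frac{B\|\H(\w)\|_2}{s}$; one can also bound it relative to $\kappa$ via $\|\H\|_2\le M$. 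Applying the matrix Bernstein inequality (for sampling without replacement one can either invoke a Hoeffding--Serfling-type reduction or simply note that without-replacement sampling is dominated by with-replacement for these bounds, a standard fact), we get
\begin{align*}
\P\big(\|\H^k(\w) - \H(\w)\|_2 \ge t\big) \le 2d\exp\!\left(\frac{-s t^2/2}{B\|\H\|_2 + Bt/3}\right).
\end{align*}
Setting $t = \epsilon\kappa$ and using $\|\H\|_2 \le M$ — or, to get the clean stated bound, bounding the denominator crudely by something proportional to $B\kappa$ when $\epsilon\le 1/2$ — the exponent becomes $\le -\frac{s\epsilon^2\kappa}{4B}$ after the routine simplification, so the failure probability is at most $2d\exp(-s\kappa\epsilon^2/(4B))$, which is $\le \delta$ precisely when $s \ge \frac{4B}{\kappa\epsilon^2}\log\frac{2d}{\delta}$. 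On this event, $\H(\w) - \epsilon\kappa\I \preccurlyeq \H^k(\w) \preccurlyeq \H(\w) + \epsilon\kappa\I$, and combining with $\kappa\I \preccurlyeq \H(\w) \preccurlyeq M\I$ gives $(1-\epsilon)\kappa\I \preccurlyeq \H^k(\w) \preccurlyeq (M + \epsilon\kappa)\I \preccurlyeq (1+\epsilon)M\I$ (the last step since $\epsilon\kappa \le \epsilon M$).

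The main obstacle — or at least the point requiring care — is the uniform-over-all-$\w\in\R^d$ claim: the concentration bound above is pointwise in $\w$, but the lemma asserts the sandwich holds simultaneously for every $\w$. I would handle this by observing that for the loss families in question (e.g.\ logistic regression) the event is actually independent of $\w$, or more robustly, that the bound on $\|\nabla^2 f_j(\w)\|_2 \le B$ holds uniformly in $\w$ and the inequality $(1-\epsilon)\kappa\I\preccurlyeq\H^k(\w)\preccurlyeq(1+\epsilon)M\I$ follows for every $\w$ directly from $\kappa\I\preccurlyeq\nabla^2 f\preccurlyeq M\I$ together with a single spectral-norm deviation bound that can be made $\w$-free — for instance by noting the relevant randomness is in the \emph{choice of subset} $\S_k$, not in $\w$, and $\sup_\w\|\H^k(\w)-\H(\w)\|_2$ admits the same tail when the per-sample Hessians share the uniform operator-norm bound $B$. (A fully general argument would use a covering/net over a compact region plus self-concordance-type control, but under the stated assumptions the cleanest route is the subset-based one.) The union bound over the $K$ workers, if needed here, only inflates $\delta\to\delta/K$, i.e.\ replaces $\log\frac{2d}{\delta}$ by $\log\frac{2dK}{\delta}$, which is exactly what the main theorems assume — so at the level of this lemma no union bound over workers is required and the statement stands per-worker with probability $1-\delta$.
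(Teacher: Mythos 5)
There is a genuine gap at the heart of your quantitative step: the additive matrix-Bernstein route does not deliver the lemma's stated sample size. With $\bZ_j=\frac1s\big(\nabla^2 f_j(\w)-\H(\w)\big)$ you correctly get $\|\bZ_j\|_2\le B/s$ and a variance proxy $\big\|\sum_j\E[\bZ_j^2]\big\|_2\le \frac{B\|\H(\w)\|_2}{s}\le \frac{BM}{s}$, so Bernstein with $t=\epsilon\kappa$ gives
\begin{align*}
\P\big(\|\H^k(\w)-\H(\w)\|_2\ge \epsilon\kappa\big)\;\le\; 2d\exp\!\left(\frac{-s\,\epsilon^2\kappa^2}{2B\left(M+\epsilon\kappa/3\right)}\right),
\end{align*}
whose exponent scales as $s\epsilon^2\kappa^2/(BM)$, not $s\epsilon^2\kappa/B$. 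Making this at most $\delta$ therefore requires $s\gtrsim \frac{BM}{\kappa^2\epsilon^2}\log\frac{2d}{\delta}$, which is worse than the lemma's requirement $s\ge\frac{4B}{\kappa\epsilon^2}\log\frac{2d}{\delta}$ by a condition-number factor $M/\kappa$. Your escape hatch---``bounding the denominator crudely by something proportional to $B\kappa$''---is not available: the variance proxy involves $\|\H(\w)\|_2$, i.e.\ the \emph{largest} eigenvalue of the mean, which can be as large as $M\gg\kappa$, and no choice of $\epsilon\le 1/2$ changes that. The issue is intrinsic to forcing a single spectral-norm deviation of size $\epsilon\kappa$ to control the smallest eigenvalue.

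The paper avoids this by using the one-sided matrix Chernoff bounds (Tropp, Theorem 2.2) directly on the positive-semidefinite summands $\frac1s\nabla^2 f_j(\w)\preccurlyeq\frac{B}{s}\I$: the lower tail is relative to $\mu_{\min}=\lambda_{\min}(\H)\ge\kappa$ and the upper tail relative to $\mu_{\max}=\lambda_{\max}(\H)\le M$, giving failure probabilities $d\big[e^{-\epsilon}/(1-\epsilon)^{1-\epsilon}\big]^{s\kappa/B}$ and $d\big[e^{\epsilon}/(1+\epsilon)^{1+\epsilon}\big]^{sM/B}$, each at most $d\,e^{-s\kappa\epsilon^2/(4B)}$ after elementary estimates, which yields exactly the stated threshold on $s$. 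If you replace your Bernstein step with these two tails (your without-replacement reduction, the per-worker probability accounting, and the final sandwiching argument can stay as they are), the proof goes through; note also that your hand-waving about uniformity in $\w$ is no more rigorous than the paper's own treatment, so I would not count it as an additional defect, but the concentration step as you wrote it proves a strictly weaker statement than the lemma.
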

\begin{proof}
At the $k$-th worker which samples $\S_k$ observations from $[n]$, the following is true by Matrix Chernoff (see Theorem 2.2 in Tropp (2011))
\begin{align}
\P(\lambda_{\min}\left(\nabla^2f^k(\w)) \leq (1-\epsilon)\kappa\right) \leq \delta_1 = d\left[\frac{e^{-\epsilon}}{(1-\epsilon)^{1-\epsilon}}\right]^{s\kappa/B}, \label{chernoff:ineq1} \\
\P(\lambda_{\max}\left(\nabla^2f^k(\w)) \geq (1+\epsilon)M\right) \leq \delta_2 = d\left[\frac{e^{\epsilon}}{(1+\epsilon)^{1+\epsilon}}\right]^{sM/B}. \label{chernoff:ineq2}
\end{align} 
Now, using the inequality $\log(1-\epsilon) \leq \frac{-\epsilon}{\sqrt{1-\epsilon}}$ for $0\leq \epsilon < 1$, we get
$$\frac{e^{-\epsilon}}{(1-\epsilon)^{1-\epsilon}} \leq e^{-\epsilon + \epsilon\sqrt{1-\epsilon}}.$$
Further, utilizing the fact that $\sqrt{1-\epsilon} \leq \frac{1}{1+\epsilon/2}$, we get
$$e^{-\epsilon + \epsilon\sqrt{1-\epsilon}} \leq e^{\frac{-\epsilon^2}{1 + \epsilon/2}} \leq e^{-\epsilon^2/4}.$$ 
Hence, we have $\delta_1 \leq de^{-s\kappa\epsilon^2/4B}$. Further, using the fact that $\log(1+\epsilon) \geq \epsilon - \epsilon^2/2$, we get
$$\frac{e^{\epsilon}}{(1+\epsilon)^{1+\epsilon}} \leq e^{-\epsilon^2/2 + \epsilon^3/2} \leq e^{-\epsilon^2/4},$$
where the last inequality follows from the fact that $\epsilon\leq 1/2$. Hence, $\delta_2 \leq de^{-sM\epsilon^2/4B}.$ Thus, by union bound and subsequently using upper bounds on $\delta_1$ and $\delta_2$, we get 
\begin{align*}
\P\left[(1-\epsilon)\kappa\I \preccurlyeq \nabla^2f^k(\w) \preccurlyeq (1 + \epsilon)M\I\right] &\geq 1 - (\delta_1 + \delta_2) \\
&\geq 1 - (de^{-s\kappa\epsilon^2/4B} + de^{-sM\epsilon^2/4B}) \\
&\geq 1 - (2de^{-s\kappa\epsilon^2/4B}),
\end{align*}
where the last inequality follows from the fact that $\kappa \leq M$. Hence, the result follows by noting that
$$(1-\epsilon)\kappa\I \preccurlyeq \nabla^2f^k(\w) \preccurlyeq (1 + \epsilon)M\I ~ \text{ w. p. at least } 1-\delta,$$
and requiring that $\delta \geq 2de^{-s\kappa\epsilon^2/4B}$ (or $s \geq \frac{4B}{\kappa\epsilon^2}\log\frac{2d}{\delta}$).

\end{proof}

\begin{lemma}
\label{lemma:local_linear_convergence}
Let the function $f(\cdot)$ satisfy assumptions 1-3, and step-size $\alpha_t^k$ that solves the line-search condition in Eq. (5). Also, let $0<\epsilon \leq 1/2$ and $0<\delta<1$ be fixed constants. Moreover, let the sample size $s \geq \frac{4B}{\kappa\epsilon^2}\log\frac{2d}{\delta}$. Then, the LocalNewton update at the $k$-th worker satisfy
\begin{align*}
     f^k(\w_{t+1}^k) - f^k(\w_t^k) \leq  - \psi ||\g_t^k||^2 ~\forall~k\in[K],
\end{align*}
w.p. at least $1 - \delta$, where $\psi = \frac{\as\beta}{M(1+\epsilon)}$. 
\end{lemma}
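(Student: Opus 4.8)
The plan is to work on the high-probability event furnished by Lemma~\ref{lemma:subsampling_guarantee}, use the resulting conditioning of the local Hessian both to certify that the Armijo line search \eqref{local-ss} accepts a step-size no smaller than $\as$ and to lower bound the inner product $(\p_t^k)^T\g_t^k$, and then to read off the claimed per-step decrease. Since $s \geq \frac{4B}{\kappa\epsilon^2}\log\frac{2d}{\delta}$, Lemma~\ref{lemma:subsampling_guarantee} gives, with probability at least $1-\delta$, that $(1-\epsilon)\kappa\I \preccurlyeq \H^k(\w) \preccurlyeq (1+\epsilon)M\I$ for every $\w\in\R^d$ and $k\in[K]$; condition on this event. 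Writing $\p_t^k = \H^k(\w_t^k)^{-1}\g_t^k$, the fact that $\H^k(\w_t^k)^{-1}$ has spectrum contained in $\bigl[\tfrac1{(1+\epsilon)M},\tfrac1{(1-\epsilon)\kappa}\bigr]$ yields the two inequalities $(\p_t^k)^T\g_t^k = (\g_t^k)^T\H^k(\w_t^k)^{-1}\g_t^k \geq \tfrac1{(1+\epsilon)M}\|\g_t^k\|^2$ and $\|\p_t^k\|^2 \leq \tfrac1{(1-\epsilon)\kappa}(\p_t^k)^T\g_t^k$, which will be used at the two ends of the argument.

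For the line search, set $\phi(\alpha) = f^k(\w_t^k - \alpha\p_t^k)$, so $\phi(0)=f^k(\w_t^k)$ and $\phi'(0) = -(\p_t^k)^T\g_t^k$. A second-order Taylor expansion of $\phi$, with the curvature term controlled by the upper Hessian bound $(1+\epsilon)M\I$ from Lemma~\ref{lemma:subsampling_guarantee} \emph{along the whole search segment} $\{\w_t^k-\alpha\p_t^k:0\leq\alpha\leq\as\}$, gives $\phi(\alpha)\leq\phi(0)-\alpha(\p_t^k)^T\g_t^k+\tfrac{(1+\epsilon)M}{2}\alpha^2\|\p_t^k\|^2$; substituting $\|\p_t^k\|^2\leq\tfrac1{(1-\epsilon)\kappa}(\p_t^k)^T\g_t^k$ shows that the sufficient-decrease inequality in \eqref{local-ss} holds for all $\alpha$ up to a threshold of order $(1-\beta)\kappa/M$. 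The choice of $\as$ in \eqref{as_condition} is precisely calibrated so that $\as$ stays below this acceptance threshold for all admissible $\epsilon\leq1/2$ and $\beta\leq1/2$ (this is what the two terms in the defining minimum, and in particular the $M-\kappa/4$ correction, are for), so the maximizer in \eqref{local-ss} satisfies $\alpha_t^k\geq\as$.

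It then remains only to combine: since $\alpha_t^k$ satisfies \eqref{local-ss} and $\alpha_t^k\geq\as$, using $(\p_t^k)^T\g_t^k\geq\tfrac1{(1+\epsilon)M}\|\g_t^k\|^2$ from the first step,
\begin{align*}
f^k(\w_{t+1}^k)-f^k(\w_t^k)\;\leq\;-\alpha_t^k\beta\,(\p_t^k)^T\g_t^k\;\leq\;-\as\beta\cdot\tfrac1{(1+\epsilon)M}\|\g_t^k\|^2\;=\;-\psi\|\g_t^k\|^2,
\end{align*}
on the event of probability at least $1-\delta$, which is the claim. The step I expect to be the main obstacle is the second paragraph: carefully propagating the $(1\pm\epsilon)$ Hessian distortions from Lemma~\ref{lemma:subsampling_guarantee} through the line-search estimate and checking that the explicit formula \eqref{as_condition} for $\as$ really does fall below the acceptance threshold for the full range $\epsilon\in(0,1/2]$ (this is where one must be careful about whether the crude quadratic upper bound suffices or a sharper second-order estimate along $\p_t^k$ is needed, and about invoking the Hessian bound uniformly over the search segment rather than only at $\w_t^k$); Steps~1 and~3 are routine applications of the Hessian sandwich, the line-search condition, and basic matrix inequalities.
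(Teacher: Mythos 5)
Your proposal is correct and takes essentially the same route as the paper's proof: condition on the Hessian sandwich of Lemma~\ref{lemma:subsampling_guarantee}, use the $(1+\epsilon)M$ quadratic upper bound together with the $(1-\epsilon)\kappa$ curvature lower bound along $\p_t^k$ to show the Armijo condition \eqref{local-ss} is satisfied at $\alpha=\as$, and then combine the resulting decrease $-\as\beta(\p_t^k)^T\g_t^k$ with $(\g_t^k)^T(\H_t^k)^{-1}\g_t^k \ge \|\g_t^k\|^2/\bigl[(1+\epsilon)M\bigr]$ to obtain the $-\psi\|\g_t^k\|^2$ bound. The one step you flag as delicate---that the explicit $\as$ of \eqref{as_condition} sits below the acceptance threshold $\tfrac{2(1-\beta)\kappa(1-\epsilon)}{M(1+\epsilon)}$ for the whole range $\epsilon\le 1/2$---is handled with the same brevity in the paper, which simply asserts that $\as\le(1-\beta)\kappa/M$ suffices, so your argument matches the paper's in both structure and level of detail.
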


\begin{proof}
From Lemma \ref{lemma:subsampling_guarantee}, we know that $f^k(\cdot)$ is $M(1-\epsilon)$ smooth with probability $1 - \delta$. $M$-smoothness of a function $g(\cdot)$ implies 
\begin{equation}
g(\y) - g(\x) \leq (\y-\x)^T\nabla g(\x) + \frac{M}{2}||\y-\x||^2~\forall~ \x,\y\in \R^d.
\end{equation}
Hence, 
\begin{align}
    f^k(\w_t^k - \alpha\p_t^k) - f^k(\w_t^k) \leq (-\alpha\p_t^k)^T\g^k(\w_t^k) + \frac{M(1-\epsilon)}{2}\alpha^2||\p_t^k||^2. 
\end{align}
The above inequality is satisfied for all $\alpha\in \R$.
We know that $\alpha_t^k$, the local step-size at worker $k$, satisfies the line-search constraint in Eq. (5).
Thus, for $\alpha_t^k \in (0,1]$ to exist that satisfies the line-search condition, it is enough to find $\alpha > 0$ that satisfies
\begin{align}
  -\alpha(\p_t^k)^T\H_t^k\p_t^k + \frac{ M(1-\epsilon)}{2}\alpha^2\|\p_t^k\|^2 \leq -\alpha\beta(\p_t^k)^T\H_t^k\p_t^k  , 
\end{align}
where we have used the fact that $\g_t^k = \H_t^k\p_t^k$. Thus, $\alpha$ must satisfy
\begin{align}
    \frac{M(1-\epsilon)}{2}\alpha\|\p_t^k\|^2 \leq (1-\beta)(\p_t^k)^T\H_t^k\p_t^k.
\end{align}
Using lemma \ref{lemma:subsampling_guarantee}, we know that for sufficiently large sample-size at the $k$-th worker, we get
\begin{align}
    (1 - \epsilon) \nabla^2 f(\w) \preceq \nabla^2 f^k(\w) \preceq (1 + \epsilon) \nabla^2 f(\w)
\end{align}
with probability $1 - \delta$. Also, by $\kappa$-strong convexity of $f(\cdot)$, we know that $\nabla^2 f(\w) \succeq \kappa \I$. Thus, the local line-search constraint is always satisfied for 
$$\alpha \leq \frac{2(1-\beta)\kappa(1-\epsilon)}{M(1+\epsilon)}.$$ 
Hence, if we choose $\as \leq \frac{2(1-\beta)\kappa(1-\epsilon)}{M(1+\epsilon)}$, or $\as \leq \frac{\kappa(1-\beta)}{M}$ for $\epsilon < 1/2$, we are guaranteed to have the line-search condition from Eq. (5) satisfied with $\alpha_t^k = \as$. This is satisfied by the line search equation in  Eq. (5).
Hence, from the line-search guarantee, we get
\begin{align}
f^k(\mathbf{w}_{t+1}^k) - f^k(\mathbf{w}_{t}^k) &\leq - \as \beta (\p_t^k)^T\g_t^k\\
&= \as \beta (\g_t^k)^T(\H_t^k)^{-1}\g_t^k,\\
&\leq - \as \beta \frac{1}{M(1+\epsilon)} \|\g_t^k\|^2,
\end{align}
w.p. $1 - \delta$. Here, the last inequality uses the fact that $f^k(\cdot)$ is $M(1 + \epsilon)$--smooth, that is, $\H_t^k \preceq M(1+\epsilon)\I$. This proves the desired result.
\end{proof}

\section{Proof of Theorem \ref{THM:L1}}
\label{app:thm1}
The proofs for theorems in this paper use the auxiliary lemmas in Appendix \ref{app:aux_lemmas}.
\begin{proof}
The proof of the theorem is based on the following two high probability lower bounds:
\paragraph{Case 1:}
\begin{align}
\label{eqn:key-descent-lemma}
	f(\bw_t) - f(\bw_{t+1}) \geq C \|\g(\bw_t)\|^2,
\end{align}
 where $C = \frac{\as\beta(1-\epsilon)}{2M(1+\epsilon)}$ is a constant, and
 \paragraph{Case 2}
 \begin{align}
\label{eqn:key-descent-lemma-plus}
	f(\bw_t) - f(\bw_{t+1}) \geq C_1 \|\g(\bw_t)\|^2 - \frac{\eta \Gamma}{\kappa(1-\epsilon)},
\end{align}
 where $C_1$ is a constant ($>0$) and $\eta = (1+ \sqrt{2\log (\frac{1}{\delta})})\sqrt{\frac{1}{s}}\Gamma$.

 We will prove the above result shortly, but let us 
complete the proof of the theorem assuming that Eq. \eqref{eqn:key-descent-lemma} and Eq. \eqref{eqn:key-descent-lemma-plus} are true.  

\paragraph{Case 1 (using Eq. \eqref{eqn:key-descent-lemma})}
\vspace{10pt}

\noindent Invoking the $\kappa$ strong convexity of the the function
$\f$ we have
\begin{align}
     f(\bw_t) - f(\w^*) \leq \frac{1}{2\kappa} \|\g(\bw_t)\|^2,
\end{align} 
where $\bw^*$ is the unique global minimizer of the function $\f$. Combining
the last lower bound with equation~\eqref{eqn:key-descent-lemma} we obtain
\begin{align}
  f(\bw_{t+1}) - f(\bw_{t}) \leq (1 - 2\kappa C) (f(\bw_t) - f(\w^*)), 
\end{align}
with probability $1-\delta$. Also note that
\begin{align*}
	1 > 1 - 2 \kappa C = 1 - \frac{\kappa\as\beta(1-\epsilon)}{M(1+\epsilon)} > 0,
\end{align*}
where the last inequality uses the definition of $\as$ from Eq. (6). The completes the proof of Theorem~3.2. 

\paragraph{Case 2 (using Eq. \eqref{eqn:key-descent-lemma-plus})} Using the same steps as before, and using the condition of Eq. \eqref{eqn:key-descent-lemma-plus}, we obtain Theorem~3.2.

It remains to prove the  claim~\eqref{eqn:key-descent-lemma} and \eqref{eqn:key-descent-lemma-plus}.

\paragraph{Proof of the claim~\eqref{eqn:key-descent-lemma}:}

Recall that for $L=1$, we have
\begin{align*}
	\w_{t+1}^k = \bw_t - \alpha_t^k\p_t^k, 
	\quad \text{and} \quad
\bw_{t+1} := \frac{1}{K}\su \w_{t+1}^k = \bw_t - \frac{1}{K}\su\alpha_t^k \p_t^k,
\end{align*}
where the $\p_t^k = (\H_t^k)^{-1}\g_t^k, \H_t^k = (\H^k)^{-1}(\bw_t)$ and
$\g_t^k = \g^k(\bw_t)$. Invoking the M-smoothness of the function $\f(\cdot)$
we have
\begin{align}
    f(\bw_t) - f(\bw_{t+1}) &\geq \frac{-M}{2K^2}\|\bw_{t} - \bw_{t+1}\|^2 + \langle \g(\bw_t), \bw_{t} - \bw_{t+1}\rangle \nn\\
    &\geq \frac{-M}{2K^2}\left\|\su (\alpha_t^k)\p_t^k\right\|^2 + \langle \g(\bw_t), \frac{1}{K}\su \alpha_t^k\p_t^k\rangle \nn\\
    & \stackrel{(i)}{\geq} \frac{-M}{2K}\su (\alpha_t^k)^2\|\p_t^k\|^2 + \langle \g(\bw_t), \frac{1}{K}\su \alpha_t^k\p_t^k\rangle \nn\\
    &= \frac{1}{K}\su \left(\alpha_t^k(\p_t^k)^T\g(\bw_t) - \frac{M}{2}(\alpha_t^k)^2\|\p_t^k\|^2\right) 
    \label{ineq:f(w_t) - f(w_t+1)}
\end{align}
where the inequality (i) uses the following fact
\begin{equation}\label{ineq:fact_vectors}
\left\|\frac{1}{K}\su\a^k\right\|^2 \leq \frac{1}{K}\su\|\a^k\|^2, 
\end{equation} 
for all vectors $\a^1, \a^2, \cdots, \a^K\in \R^d$.

We now complete 
the proof by using the following bound on the first term in Eq. \eqref{ineq:f(w_t) - f(w_t+1)}. In particular, In the first case, we show that, for all $k\in [K]$ provided
\begin{align*}
    s \gtrsim \left(  \frac{\Gamma^2}{\epsilon_1^2 G^2} \log (d/\delta) \right),
\end{align*}
and $\|\g^k(\bw_t)\|\geq G$, where $\epsilon_1 >0$ (small number), we have
%
\begin{align}
	\alpha_t^k(\p_t^k)^T\g(\bw_t) &\geq  
	\left(\psi - \frac{\epsilon_1}{\kappa(1-\epsilon)}\right) \|\g_t^k\|^2 + \frac{\kappa(1-\epsilon)(\alpha_t^k)^2}{2}\|\p_t^k\|^2
\label{eqn:term1}    
\end{align}	
with probability at least $1- 4\delta$.

Let us substitute Eq. \eqref{eqn:term1} in equation~\eqref{ineq:f(w_t) - f(w_t+1)}, we get
\begin{align}
\label{ineq3:f(bt) - f(bt+1)}
   f(\bw_t) - f(\bw_{t+1}) &\geq 
   \frac{1}{K}\su\left[\left(\psi - \frac{\epsilon_1}{\kappa(1-\epsilon)}\right) \|\g_t^k\|^2 - \frac{(M -\kappa(1-\epsilon))(\alpha_t^k)^2}{2}\|\p_t^k\|^2\right] \nn \\
   &\geq \frac{1}{K}\su\left[\left(\psi - \frac{\epsilon_1}{\kappa(1-\epsilon)}\right) \|\g_t^k\|^2 - \frac{(M -\kappa(1-\epsilon))(\alpha_t^k)^2}{2\kappa^2(1-\epsilon)^2}\|\g_t^k\|^2\right]
\end{align}
where the last inequality follows from the fact that the function
$\f^k$ is $\kappa(1-\epsilon)$ strongly convex with probability $1 - \delta$, and thus
\begin{align}\label{ineq:norm_ptk}
    \|\p_t^k\|^2 := \|(\H_t^k)^{-1}\g_t^k\|_2^2
     \leq \|(\H_t^k)^{-1}\|_{2}^2 \|\g_t^k\|^2 
     \leq \frac{1}{\kappa^2(1-\epsilon)^2}\|\g_t^k\|^2.
\end{align}
with probability $1 - \delta$.
Now, using the upper bound on $\alpha_t^k$,
we have
\begin{align}
   f(\bw_t) - f(\bw_{t+1}) &\geq  \frac{1}{K}\su \left[\left(\psi - \frac{\epsilon_1}{\kappa(1-\epsilon)}\right)\|\g_t^k\|^2 - \frac{(M - \kappa(1-\epsilon)^2)}{2}\frac{{\as}^2}{\kappa^2(1-\epsilon)^2}\|\g_t^k\|^2\right]\nn \\
   &= \left(\psi - \frac{\epsilon_1}{\kappa(1-\epsilon)} - \frac{(M - \kappa(1-\epsilon)^2)}{2}\frac{{\as}^2}{\kappa^2(1-\epsilon)^2}\right)\frac{1}{K}\su\|\g_t^k\|^2 \nn \\
   &\geq C\frac{1}{K}\su\|\g_t^k\|^2,
\end{align}
with probability exceeding $1-6\delta$, where $C = \frac{(1-\epsilon)\psi}{2} - \frac{\epsilon_1}{\kappa(1-\epsilon)}$, and the last bound follows by substituting the value of $\alpha^*$ from equation~(6) and using the fact that 
$0 < \epsilon < 1/2$. 
Moreover, using Eq. \eqref{ineq:fact_vectors}, we get 
$$\|\g(\cdot)\|^2 \leq \frac{1}{K}\su\|\g^k(\cdot)\|^2,$$
which prove Eq. \eqref{eqn:key-descent-lemma}. 

It now remains to prove bound~\eqref{eqn:term1}.

\paragraph{Proof of bound~\eqref{eqn:term1}:}
From the uniform subsampling property (similar to Lemma \ref{lemma:subsampling_guarantee},  see Appendix~\ref{app:wo_error_floor}), we get 
\begin{align}
\label{ineq:ptk_gt_sketch}
    |(\p_t^k)^T\g(\bw_t) - (\p_t^k)^T\g^k(\bw_t)| \leq  \epsilon_1\|(\p_t^k)\|\|\g^k(\bw_t)\| \text{ w.p. } 1-\delta.
\end{align}
Thus,
\begin{align}
\label{ineq:ptk_gt1}
(\p_t^k)^T\g(\bw_t) \geq  (\p_t^k)^T\g^k(\bw_t) - \epsilon_1\|(\p_t^k)\|\|\g^k(\bw_t)\|
\end{align}
w.p. $1-\delta$.
Now, since the function
$\f^k$ is $\kappa(1-\epsilon)$ strongly-convexity with probability $1 - \delta$, we have the following 
bound w.p. at least $1-\delta$:
\begin{align}
\label{ineq:ptk_gtk1}
     \alpha_t^k(\p_t^k)^T\g_t^k &\geq (f^k(\bw_t) - f^k(\w_{t+1}^k)) + \frac{\kappa(1-\epsilon)}{2}(\alpha_t^k)^2\|\p_t^k\|^2
\end{align}
Combing the equations~\eqref{ineq:ptk_gt1}-\eqref{ineq:ptk_gtk1} and using Lemma \ref{lemma:local_linear_convergence}
we have 
\begin{align*}
	\alpha_t^k(\p_t^k)^T\g(\bw_t) & \geq (f^k(\bw_t) - f^k(\w_{t+1}^k)) + \frac{\kappa(1-\epsilon)}{2}(\alpha_t^k)^2\|\p_t^k\|^2 - \epsilon_1\|(\p_t^k)\|\|\g^k(\bw_t)\|
	\\
	 &\stackrel{(i)}\geq    \psi\|\g_t^k\|^2 + \frac{\kappa(1-\epsilon)}{2}(\alpha_t^k)^2\|\p_t^k\|^2 - \epsilon_1\|(\p_t^k)\|\|\g^k(\bw_t)\| \\
	 &\stackrel{(ii)}\geq    \psi\|\g_t^k\|^2 + \frac{\kappa(1-\epsilon)(\alpha_t^k)^2}{2}\|\p_t^k\|^2 - \frac{\epsilon_1}{\kappa(1-\epsilon)}\|\g^k(\bw_t)\|^2\\
	 &= \left(\psi - \frac{\epsilon_1}{\kappa(1-\epsilon)}\right) \|\g_t^k\|^2 + \frac{\kappa(1-\epsilon)(\alpha_t^k)^2}{2}\|\p_t^k\|^2
\end{align*}
with probability exceeding $1-4\delta$, where the inequality (i) follows from Lemma~\ref{lemma:local_linear_convergence} and inequality (ii) follows from \eqref{ineq:norm_ptk}.

Note that the bound in \eqref{eqn:term1} hold for all $k \in [K]$ with probability $1-\delta_1$ (thus, the sample size increases by a factor of $K$ in the $\log(\cdot)$ term). This concludes the Case 1 of our proof. We now move to Case 2.

\paragraph{Proof of the claim~\eqref{eqn:key-descent-lemma-plus}:}
We now continue with the same analysis and show the following
\begin{align}
\label{ineq2:f(bt) - f(bt+1)}
   f(\bw_t) - f(\bw_{t+1}) \geq C_1 \frac{1}{K}\su\|\g_t^k\|^2 - \frac{\eta \Gamma}{\kappa(1-\epsilon)},
\end{align}
with probability at least $1-4\delta$.

In this case, we show that the requirement of a lower bound on $\|g^k(\bw_t)\|$ and $s$ can be relaxed at the expense of getting hit by an error floor. In particular, we show that 
\begin{align}
	\alpha_t^k(\p_t^k)^T\g(\bw_t) &\geq  
	\psi \|\g_t^k\|^2 + \frac{\kappa(1-\epsilon)(\alpha_t^k)^2}{2}\|\p_t^k\|^2 - \frac{\eta \Gamma}{\kappa(1-\epsilon)}
\label{eqn:term1plus}    
\end{align}
with probability at least $1-4\delta$, where $\eta = (1+ \sqrt{2\log (\frac{1}{\delta})})\sqrt{\frac{1}{s}}\Gamma$. Substituting this yields the bound of Eq. \eqref{ineq2:f(bt) - f(bt+1)}.

\paragraph{Proof of bound Eq. \eqref{eqn:term1plus}}: From the uniform subsampling property (see Appendix~\ref{app:with_error_floor}), we get 
\begin{align}
\label{ineq:ptk_gt_sketch_one}
    |(\p_t^k)^T\g(\bw_t) - (\p_t^k)^T\g^k(\bw_t)| \leq  \eta \|(\p_t^k)\| \text{ w.p. } 1-\delta.
\end{align}
where $\eta = (1+ \sqrt{2\log (\frac{1}{\delta})})\sqrt{\frac{1}{s}}\Gamma$. Thus,
\begin{align}
\label{ineq:ptk_gt}
(\p_t^k)^T\g(\bw_t) \geq  (\p_t^k)^T\g^k(\bw_t) - \eta\|(\p_t^k)\|
\end{align}
w.p. $1-\delta$.
Now, since the function
$\f^k$ is $\kappa(1-\epsilon)$ strongly-convexity with probability $1 - \delta$, we have the following 
bound w.p. at least $1-\delta$:
\begin{align}
\label{ineq:ptk_gtk}
     \alpha_t^k(\p_t^k)^T\g_t^k &\geq (f^k(\bw_t) - f^k(\w_{t+1}^k)) + \frac{\kappa(1-\epsilon)}{2}(\alpha_t^k)^2\|\p_t^k\|^2
\end{align}
Combing the equations~\eqref{ineq:ptk_gt}-\eqref{ineq:ptk_gtk} and using Lemma \ref{lemma:local_linear_convergence}
we have 
\begin{align*}
	\alpha_t^k(\p_t^k)^T\g(\bw_t) & \geq (f^k(\bw_t) - f^k(\w_{t+1}^k)) + \frac{\kappa(1-\epsilon)}{2}(\alpha_t^k)^2\|\p_t^k\|^2 - \eta \|(\p_t^k)\|
	\\
	 &\stackrel{(i)}\geq    \psi\|\g_t^k\|^2 + \frac{\kappa(1-\epsilon)}{2}(\alpha_t^k)^2\|\p_t^k\|^2 - \eta \|(\p_t^k)\| \\
	 &\stackrel{(ii)}\geq    \psi\|\g_t^k\|^2 + \frac{\kappa(1-\epsilon)(\alpha_t^k)^2}{2}\|\p_t^k\|^2 - \frac{\eta }{\kappa(1-\epsilon)} \Gamma
\end{align*}
with probability exceeding $1-4\delta$, where the inequality (i) follows from Lemma~\ref{lemma:local_linear_convergence} and inequality (ii) follows from \eqref{ineq:norm_ptk} and the fact that $\|\g^k(\bw_t)\| \leq \Gamma$.

\end{proof}	

\section{Proof of Theorem \ref{THM:LGT1}}\label{app:thm2}
\begin{proof}
Recall from perturbed iterate analysis
\begin{align}
\bw_{t+1} = \bw_{t_0} - \sum_{\tau = t_0}^t \bp_{\tau},
\end{align}
where $\bp_\tau = \frac{1}{K}\su\alpha_\tau^k \p_\tau^k$ is the average descent direction and $\p_\tau^k = (\H_\tau^k)^{-1}\g_\tau^k$ is the local descent direction at the $k$-th worker at time $\tau$.

Similar to the proof of theorem 3.2, we next invoke the $M$-smoothness property of $f(\cdot)$ to get
\begin{align}\label{ineq:f(w_t0) - f(w_t+1)}
    f(\bw_{t_0}) - f(\bw_{t+1}) &\geq \frac{-M}{2}\|\sut\bp_\tau\|^2 + \langle \g(\bw_{t_0}), \sut \bp_\tau\rangle \nn\\
    &= \frac{-M}{2}\|\frac{1}{K}\su\sut \alpha_\tau^k\p_\tau^k\|^2 + \frac{1}{K}\su\sut\langle \g(\bw_{t_0}), \alpha_\tau^k\p_\tau^k\rangle \nn\\
    &\geq \frac{-M}{2K}\su\|\sut \alpha_\tau^k\p_\tau^k\|^2 + \frac{1}{K}\su\sut\langle \g(\bw_{t_0}), \alpha_\tau^k\p_\tau^k\rangle,
\end{align}
where the last inequality uses the fact 
\begin{equation}
\left(\su \|\a_k\|\right)^2 \leq K\su \|\a_k\|^2, ~\forall~ \a_k\in\R^d, k\in [K].
\end{equation}
Similarly, by $\kappa(1-\epsilon)$ strong-convexity of $f^k(\cdot)$, we get
\begin{align}
    f^k(\w_{t_0}^k) - f^k(\w_{t+1}^k) &\leq \frac{-\kappa(1-\epsilon)}{2}\|\sut\alpha_\tau^k\p_t^k\|^2 + \langle\g_t^k, \sut\alpha_\tau^k\p_\tau^k\rangle,
\end{align}
with probability $1-\delta$. The above inequality, when averaged across $k$, becomes
\begin{align}\label{ineq:su_ptk_gtk}
    \frac{1}{K}\su\left(f^k(\w_{t_0}^k) - f^k(\w_{t+1}^k)\right) &\leq \frac{-\kappa(1-\epsilon)}{2K}\su\|\sut\alpha_\tau^k\p_t^k\|^2 + \frac{1}{K}\su\sut\langle\g_t^k, \alpha_\tau^k\p_\tau^k\rangle
\end{align}
Moreover, similar to Eq.  \eqref{ineq:ptk_gt_sketch_one}, we get
\begin{align}
    |\r^T\g(\bw_t) - \r^T\g^k(\bw_t)| \leq  \eta \|\r\| \text{ w.p. } 1-\delta.
\end{align}
 where $\eta = (1+ \sqrt{2\log (\frac{m}{\delta})})\sqrt{\frac{1}{s}}\Gamma$.
Keeping $\r = \alpha_\tau^k\p_\tau^k$ and $\w = \bw_{t_0}$, we get 
\begin{align}\label{ineq:ptk_gt_tau}
(\alpha_\tau^k\p_\tau^k)^T\g(\bw_{t_0})  \geq (\alpha_\tau^k\p_\tau^k)^T\g^k(\bw_{t_0}) - \eta  \alpha_\tau^k\|\p_\tau^k\|,
\end{align}
w. p. $1-\delta$, where $\eta = (1+ \sqrt{2\log (\frac{m}{\delta})})\sqrt{\frac{1}{s}}\Gamma$.

Now, after combining inequalities \eqref{ineq:f(w_t0) - f(w_t+1)} and \eqref{ineq:su_ptk_gtk} using \eqref{ineq:ptk_gt_tau} to eliminate the terms  $\frac{1}{K}\su\sut\langle \g(\bw_{t_0}), \alpha_\tau^k\p_\tau^k\rangle$ and $\frac{1}{K}\su\sut\langle \g^k(\bw_{t_0}), \alpha_\tau^k\p_\tau^k\rangle$, we get
\begin{align}
   f(\bw_{t_0}) - f(\bw_{t+1}) \geq  
   \frac{1}{K}\su (f^k(\bw_{t_0}^k) - f^k(\bw_{t+1}^k) &- 
   \frac{(M - \kappa(1-\epsilon))}{2K}\su(\|\sut\alpha_\tau^k\p_t^k\|^2) \nn \\ 
   &- \frac{1}{K}\su \sum_{\tau = t_0}^t \eta \alpha_\tau^k\|\p_\tau^k\| . 
\end{align}
Also, from Lemma \ref{lemma:local_linear_convergence}, we have
\begin{equation}
f^k(\bw_{t_0}^k) - f^k(\bw_{t+1}^k) \geq \psi \sut \|\g_\tau^k\|^2.
\end{equation}
Using above, we get
\begin{align}
   f(\bw_{t_0} - f(\bw_{t+1}) \geq  
   \frac{1}{K}\psi\su\sut\|\g_\tau^k\|^2 &- 
   \frac{(M - \kappa(1-\epsilon))}{2K}\su(\|\sut\alpha_\tau^k\p_t^k\|^2) \nn\\
   &- \frac{1}{K}\su \sum_{\tau = t_0}^t \eta \alpha_\tau^k\|\p_\tau^k\|. 
\end{align}
Using triangle inequality above, we get
\begin{align}
   f(\bw_{t_0}) - f(\bw_{t+1}) \geq  
   \frac{1}{K}\psi\su\sut\|\g_\tau^k\|^2 &- 
   \frac{(M - \kappa(1-\epsilon))}{2K}\su\sut(\alpha_\tau^k)^2\|\p_\tau^k\|^2 \nn \\ 
   &- \frac{1}{K}\su \sum_{\tau = t_0}^t\eta \alpha_\tau^k\|\p_\tau^k\|. 
\end{align}
Also, since $\alpha_t^k \leq 1$ and $\|\p_\tau^k\|\leq \frac{1}{\kappa(1-\epsilon)}\|\g_\tau^k\|$, we get
\begin{align}
   f(\bw_{t_0} - f(\bw_{t+1}) &\geq  
   \frac{1}{K}\psi\su\sut\|\g_\tau^k\|^2 - 
   \frac{(M - \kappa(1-\epsilon))}{2K\kappa^2(1-\epsilon)^2}\su\sut\|\g_\tau^k\|^2 \nn\\
   &- \frac{1}{K}\su \sum_{\tau = t_0}^t \frac{\eta}{\kappa(1-\epsilon)}\|\g_\tau^k\| \nn
   \\
   &= \frac{C}{K}\su\sut \|\g_\tau^k\|^2 -  \frac{\eta L \Gamma}{\kappa(1-\epsilon)}
\end{align}
where $C = \psi - \frac{(M - \kappa(1-\epsilon))}{2K\kappa^2(1-\epsilon)^2},$ which proves the claim. 

\end{proof}

\section{Concentration Inequalities: With and without Error Floor}
Consider a vector $v \in \mathbb{R}^d$. We have defined the following: $\g(\bw_t) = \frac{1}{n}\sum_{i}\g_i(\bw_t)$ and $\g^k(\bw_t) = \frac{1}{s}\sum_{i \in \mathcal{S}} \g_i (\bw_t)$, where $\g_i$ denotes the local gradient in worker machine $i$, and $\mathcal{S}$ is the random set consisting data points for machine $k$. Let us do the calculation in two settings:

\subsection{With error floor} 
\label{app:with_error_floor}

Here we have the error floor. Note that having an error floor is not restrictive, if we go for the adaptive variation of the algorithm, where we run GIANT for the final iterations. Since GIANT has no error floor, the final accuracy won't be affected by the error floor obtained in the first few steps of the algorithm (check if this is true).

\begin{lemma}[McDiarmid's Inequality]\label{lem:mcd}
Let $X= X_1,\ldots ,X_m$ be $m$ independent random variables taking values from some set $A$, and assume that $f: A^m \rightarrow \mathbb{R}$ satisfies the following condition (bounded differences ):
\begin{align*} 
\sup_{x_1,\ldots ,x_m,\hat{x}_i} \left|  f(x_i ,\ldots,x_i,\ldots,x_m)- f(x_i ,\ldots,\hat{x}_i,\ldots,x_m) \right| \leq c_i,
\end{align*}
for all $ i \in  \{1,\ldots ,m \}$. Then for any $\epsilon>0 $ we have 
 
 \begin{align*}
 P\left[ f(X_1,\ldots ,X_m)- \mathbb{E}[f(X_1,\ldots ,X_m)] \geq \epsilon \right] \leq \exp \left( - \frac{2\epsilon^2}{\sum_{i=1}^mc_i^2} \right).
 \end{align*}
\end{lemma}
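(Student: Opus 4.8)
The plan is to prove this via the standard Doob-martingale argument (the Azuma--Hoeffding / method of bounded differences). First I would set $\mathcal{F}_0 = \{\emptyset,A^m\}$ and $\mathcal{F}_i = \sigma(X_1,\ldots,X_i)$, and define the Doob martingale $Z_i = \mathbb{E}[f(X_1,\ldots,X_m)\mid\mathcal{F}_i]$. Then $Z_0 = \mathbb{E}[f]$, $Z_m = f(X_1,\ldots,X_m)$ almost surely, and $(Z_i,\mathcal{F}_i)$ is a martingale, so that $f - \mathbb{E}[f] = \sum_{i=1}^m D_i$ where $D_i := Z_i - Z_{i-1}$ satisfies $\mathbb{E}[D_i\mid\mathcal{F}_{i-1}] = 0$.

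The key step is to bound the conditional range of each martingale increment. Using the independence of the coordinates, $Z_i$ can be written as an integral of $f$ against the (fixed) product law of $X_{i+1},\ldots,X_m$ with $X_1,\ldots,X_i$ held fixed; comparing this expression for two possible values of the $i$-th coordinate and invoking the bounded-differences hypothesis shows that, conditionally on $\mathcal{F}_{i-1}$, the increment $D_i$ lies in an interval $[\ell_i,u_i]$ with $\ell_i,u_i$ measurable w.r.t.\ $\mathcal{F}_{i-1}$ and $u_i - \ell_i \le c_i$. This is the main technical point, and the step I expect to require the most care, since it is where independence is genuinely used: it is what lets one pull the $c_i$ bound through the nested conditional expectations uniformly.

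Next I would apply Hoeffding's lemma conditionally: if $Y$ has $\mathbb{E}[Y\mid\mathcal{F}]=0$ and $Y \in [\ell,u]$ with $u-\ell\le c$, then $\mathbb{E}[e^{\lambda Y}\mid\mathcal{F}] \le e^{\lambda^2 c^2/8}$ for every $\lambda\in\mathbb{R}$. Applying this to $D_m$ conditioned on $\mathcal{F}_{m-1}$ and then peeling off one factor at a time via the tower property yields
\[
\mathbb{E}\bigl[e^{\lambda(f-\mathbb{E}f)}\bigr] \;\le\; \exp\Bigl(\tfrac{\lambda^2}{8}\sum_{i=1}^m c_i^2\Bigr).
\]
Finally, a Chernoff bound: for any $\lambda>0$,
\[
P\bigl[f-\mathbb{E}f \ge \epsilon\bigr] \;\le\; e^{-\lambda\epsilon}\,\mathbb{E}\bigl[e^{\lambda(f-\mathbb{E}f)}\bigr] \;\le\; \exp\Bigl(-\lambda\epsilon + \tfrac{\lambda^2}{8}\sum_{i=1}^m c_i^2\Bigr),
\]
and optimizing over $\lambda$ (taking $\lambda = 4\epsilon/\sum_{i=1}^m c_i^2$) gives the stated tail bound $\exp(-2\epsilon^2/\sum_{i=1}^m c_i^2)$. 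One small point worth noting is integrability: the bounded-differences condition together with finiteness of $f$ at a single point already forces $f$ to be bounded, so $Z_m = f$ a.s.\ and all the expectations above are well defined without any additional assumption.
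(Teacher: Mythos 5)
Your proof is correct: the Doob-martingale decomposition, the conditional-range bound via independence and bounded differences, the conditional Hoeffding lemma with the tower property, and the Chernoff optimization at $\lambda = 4\epsilon/\sum_{i=1}^m c_i^2$ all check out, including the final constant $2$. Note that the paper itself does not prove this lemma but invokes McDiarmid's inequality as a standard known result, so your argument is simply the canonical textbook proof of the cited statement; the integrability remark at the end is a nice touch but not strictly needed since boundedness of $f$ follows immediately from the bounded-differences condition.
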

The property described  in the following is useful  for  uniform row sampling matrix.

Let $\bS \in \mathbb{R}^{n \times s}$  be any uniform sampling matrix, then for any matrix  $\bB= [\mb_1,\ldots ,\mb_n] \in \mathbb{R}^{d\times n}$ with probability $1-\delta$ for any $\delta>0$ we have, 
\begin{align} \label{eq:uniform_sketch}
   \|\frac1n \bB\bS\bS^{\top}\mathbf{1}- \frac1n \bB\mathbf{1}\| \leq  (1+ \sqrt{2\log (\frac{1}{\delta})})\sqrt{\frac{1}{s}}\max_i \|\mb_i\|,
\end{align}
where $\mathbf{1}$ is  all ones vector.

Let us first see the justification of the above statement.The vector $ \bB\mathbf{1}$ is the sum of column of the matrix $\bB$ and  $\bB\bS\bS^{\top}\mathbf{1}$ is the sum of uniformly sampled and scaled column of the matrix $\bB$ where the scaling factor is $\frac{1}{\sqrt{sp}}$ with $p=\frac{1}{n}$. If $(i_1,\ldots ,i_s)$ is the set of sampled indices then $\bB\bS\bS^{\top}\mathbf{1}= \sum_{k\in (i_1,\ldots ,i_s) }\frac{1}{sp}\mb_k$.
 
Define the function $f(i_1,\ldots ,i_s)=\|\frac1n \bB\bS\bS^{\top}\mathbf{1}- \frac1n\bB\mathbf{1}\|$. Now consider a sampled set $(i_1,\ldots,i_{j'},\ldots ,i_s)$ with only one item (column) replaced then the bounded difference is 
\begin{align*}
    \Delta&= |f(i_1,\ldots,i_j,\ldots ,i_s)-f(i_1,\ldots,i_{j'},\ldots ,i_s)|\\
    & =|\frac1n \|\frac{1}{sp}\mb_{i_j'}-\frac{1}{sp}\mb_{i_j} \| | \leq \frac{2}{s}\max_i\|\mb_{i}\|.
\end{align*}
Now we have the expectation 
\begin{align*}
    \mathbb{E}[\| \frac1n \bB\bS\bS^{\top}\mathbf{1}- \frac1n\bB\mathbf{1}\|^2] &\leq \frac{n}{sn^2}\sum_{i=1}^n \|\mb_i\|^2= \frac{1}{s} \max_i \|\mb_i\|^2 \\
 \Rightarrow   \mathbb{E}[\|\frac1n \bB\bS\bS^{\top}\mathbf{1}- \frac1n \bB\mathbf{1}\|] & \leq \sqrt{\frac{1}{s}}\max_i \|\mb_i\|.
\end{align*}
Using McDiarmid inequality (Lemma~\ref{lem:mcd}) we have 
\begin{align*}
    P\left[\|\frac1n \bB\bS\bS^{\top}\mathbf{1}- \frac1n \bB\mathbf{1}\|\geq \sqrt{\frac{1}{s}}\max_i \|\mb_i\| + t \right] \leq \exp\left(- \frac{2t^2}{s\Delta^2} \right).
\end{align*}
Equating the probability with $\delta$ we have 
\begin{align*}
     & \exp(- \frac{2t^2}{s\Delta^2})  = \delta \\
\Rightarrow &    t =\Delta \sqrt{\frac{s}{2}\log (\frac{1}{\delta})} = \max_i \|\mb_i\|\sqrt{\frac{2}{s}\log (\frac{1}{\delta})}.
\end{align*}
Finally we have  with probability $1-\delta$
\begin{align*}
   \|\frac1n \bB\bS\bS^{\top}\mathbf{1}- \frac1n \bB\mathbf{1}\| \leq  (1+ \sqrt{2\log (\frac{1}{\delta})})\sqrt{\frac{1}{s}}\max_i \|\mb_i\|,
\end{align*}
and hence equation~\eqref{eq:uniform_sketch} is justified. 

We now apply the above in distributed gradient estimation. For the $k$-th worker machine, we have
    \begin{align*}
   \|\frac1n \bB\bS_k\bS_k^{\top}\mathbf{1}- \frac1n \bB\mathbf{1}\| \leq  (1+ \sqrt{2\log (\frac{1}{\delta})})\sqrt{\frac{1}{s}}\max_i \|\mb_i\|,
\end{align*}
with probability $1-\delta$, which implies
\begin{align*}
    \|\g^k(\bw_t) - \g(\bw_t)\| \leq (1+ \sqrt{2\log (\frac{1}{\delta})})\sqrt{\frac{1}{s}}\Gamma,
\end{align*}
with probability at least $1-\delta$ provided $\|\g_i(\bw_t)\| \leq \Gamma$ for all $i \in [m]$

Writing, $\eta = (1+ \sqrt{2\log (\frac{1}{\delta})})\sqrt{\frac{1}{s}}L$, we succinctly write
\begin{align*}
    |\langle v, \g^k(\bw_t) - \g(\bw_t) \rangle| \leq \|v\| \|\g^k(\bw_t) - \g(\bw_t)\| \leq \eta \|v\|
\end{align*}
with probability at least $1-\delta$, where $\eta = \mathcal{O}(1/\sqrt{s})$ is small.

\subsection{Without error floor}  
\label{app:wo_error_floor}
In this section, we analyze the same quantity using vector Bernstein inequality. Intuitively, we show that unless $\g(\bw_t)$ is too small, we can overcome the error floor shown in the previous calculation. In particular, we assume that
\begin{align*}
   \| \g^k(\bw_t) \| \geq G.
\end{align*}
The idea here is to use the vector Bernstein inequality. 
Using the notation of Appendix~\ref{app:with_error_floor}, $\g^k(\bw_t) = \frac{1}{n} \bB \bS \bS^\top \mathbf{1}$, where $\bS$ is appropriately defined sampling matrix. Also $\g(\bw_t) = \frac{1}{n} \bB \mathbf{1}$. For the $k$-th machine, 
\begin{align*}
    \g^k(\bw_t) = \frac{1}{s}\sum_{i \in \mathcal{S}} \g_i(\bw_t),
\end{align*}
and so,
\begin{align*}
  \g^k(\bw_t) - \g(\bw_t) = \frac{1}{s}\sum_{i \in \mathcal{S}} (\g_i(\bw_t) - \g(\bw_t)),
\end{align*}
with $|\mathcal{S}|=s$. We also have $\|\g_i(\bw_t) - \g(\bw_t)\| \leq \Gamma + \Gamma =2\Gamma$, and $\mathbb{E}\|\g_i(\bw_t) - \g(\bw_t)\|^2 \leq 4\Gamma^2$. Using vector Bernstein inequality with $t = \epsilon_1 \|\g^k\|$, we obtain
\begin{align*}
    \mathbb{P}\left( \| \g^k(\bw_t) - \g(\bw_t)\| \geq \epsilon_1 \|\g^k(\bw_t)\| \right) \leq d \exp (-s \frac{\epsilon_1^2 \|\g^k\|^2}{32 \Gamma^2} + 1/4) \leq d \exp (-s \frac{\epsilon_1^2 G^2}{32L^2} + 1/4).
\end{align*}
So, as long as 
\begin{align*}
    G^2 = \Omega\left(  \frac{\Gamma^2}{\epsilon_1^2 s} \log (d/\delta)    \right),
\end{align*}
or,
\begin{align*}
    s \gtrsim \left(  \frac{\Gamma^2}{\epsilon_1^2 G^2} \log (d/\delta) \right),
\end{align*}
we have,
\begin{align*}
    |\langle v, \g^k(\bw_t) - \g(\bw_t) \rangle| \leq \|v\| \|\g^k(\bw_t) - \g(\bw_t)\| \leq \epsilon_1 \|v\| \|\g^k\|
\end{align*}
with probability at least $1-\delta$.

\section{Plots for convergence w.r.t. communication rounds}
\label{app:additional_exps}
In our main paper, we skipped the plots for convergence behavior w.r.t. communication rounds due to space constraints. In Figure \ref{fig:comm_rounds_all}, we show the convergence of adaptive \ln, GIANT, Local SGD and BFGS with communication rounds for all the five datasets considered in this paper. Again, \ln~significantly outperforms existing schemes by reducing the communication rounds by at least $60\%$ to reach the same training loss. 

\begin{figure*}[ht]
\centering
    \begin{subfigure}{.31\textwidth}
        \centering
        \includegraphics[scale=0.45]{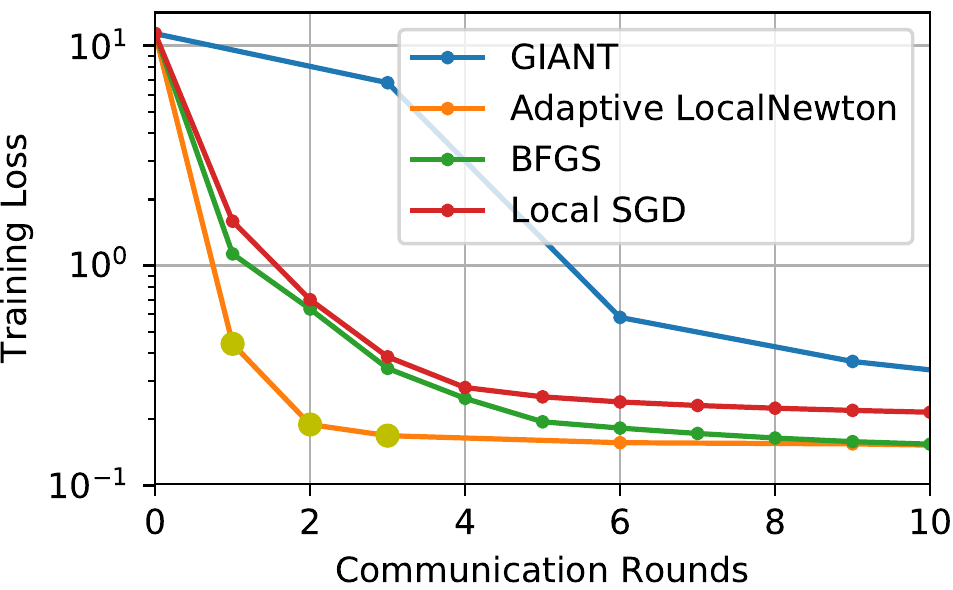}
    \end{subfigure}
    ~
    \begin{subfigure}{.31\textwidth}
      \centering
    \includegraphics[scale=0.45]{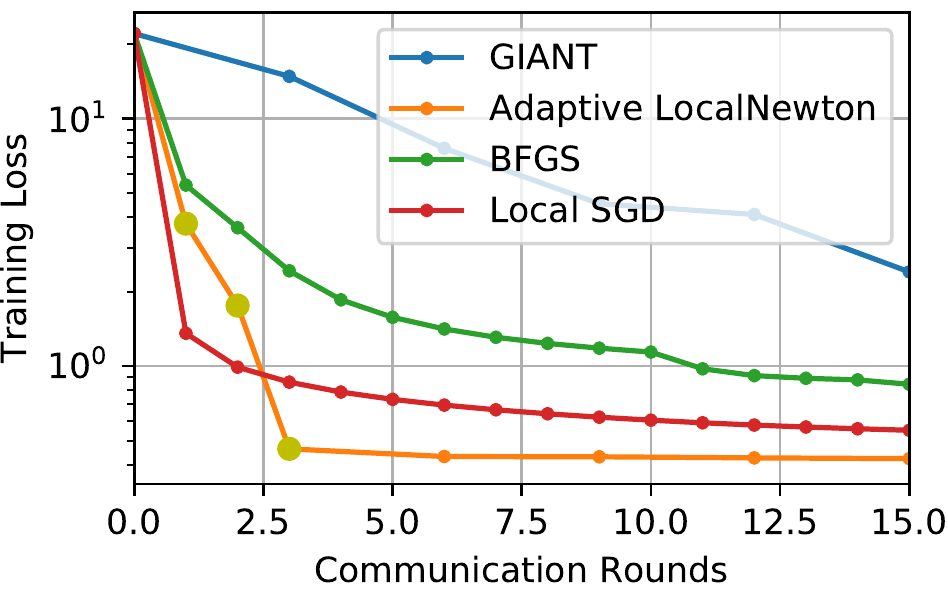}
    \end{subfigure}
~
    \begin{subfigure}{.31\textwidth}
      \centering
    \includegraphics[scale=0.45]{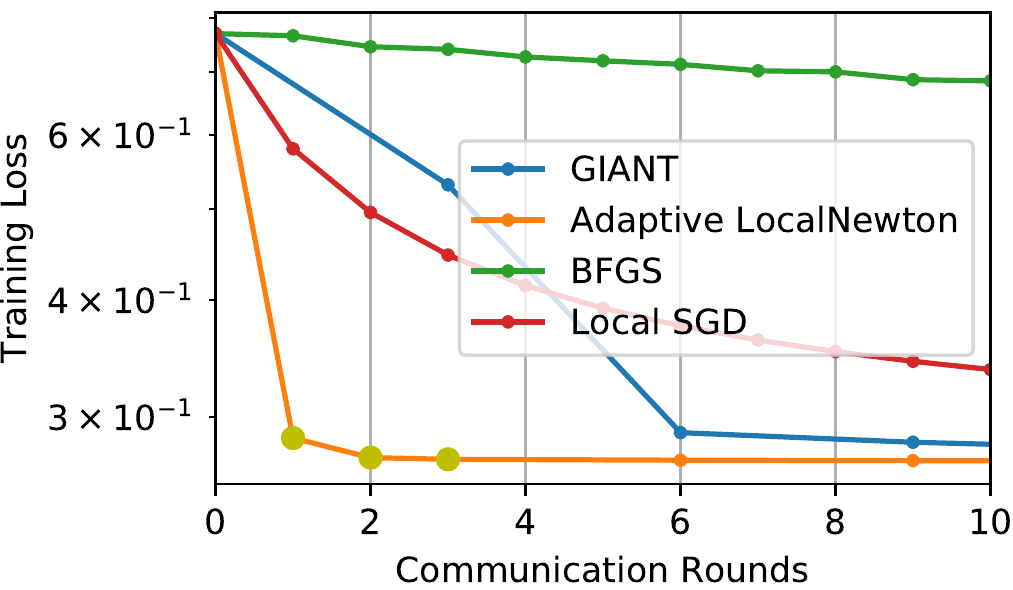}
    \end{subfigure}

    \begin{subfigure}{.31\textwidth}
        \centering
        \includegraphics[scale=0.45]{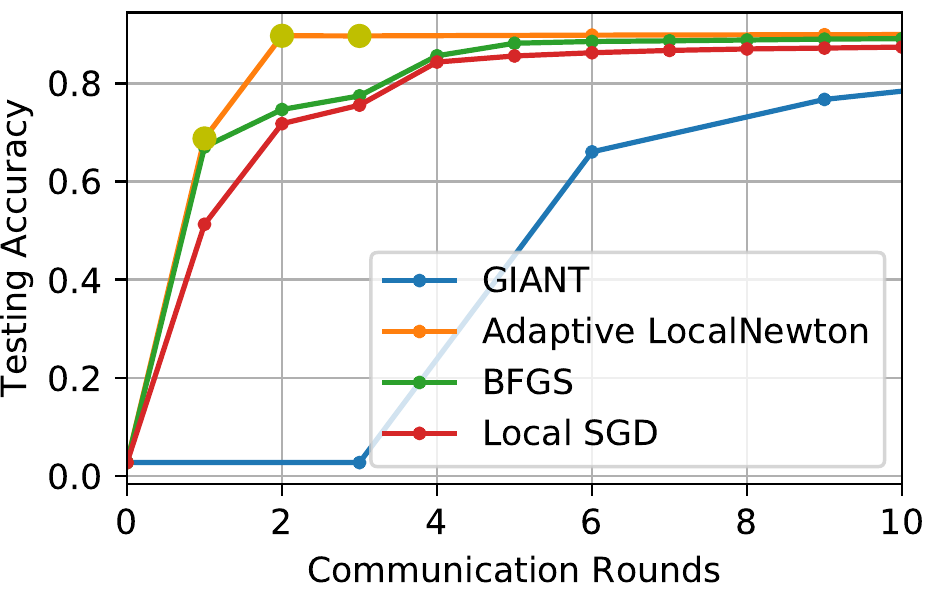}
        \caption{w8a dataset}
    \end{subfigure}
    ~
    \begin{subfigure}{.31\textwidth}
      \centering
    \includegraphics[scale=0.45]{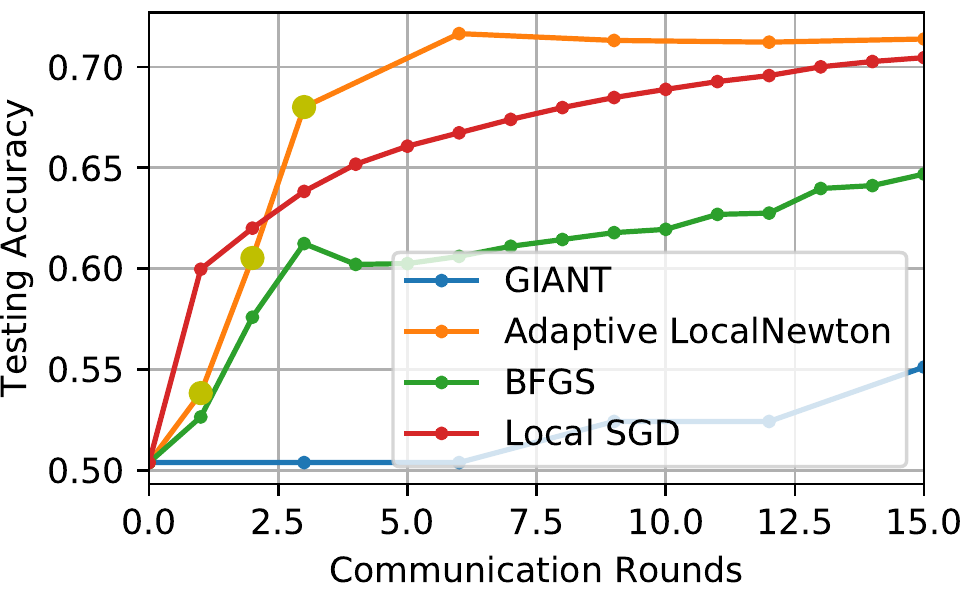}
    \caption{Covtype dataset}
    \end{subfigure}
~~~
    \begin{subfigure}{.31\textwidth}
      \centering
    \includegraphics[scale=0.45]{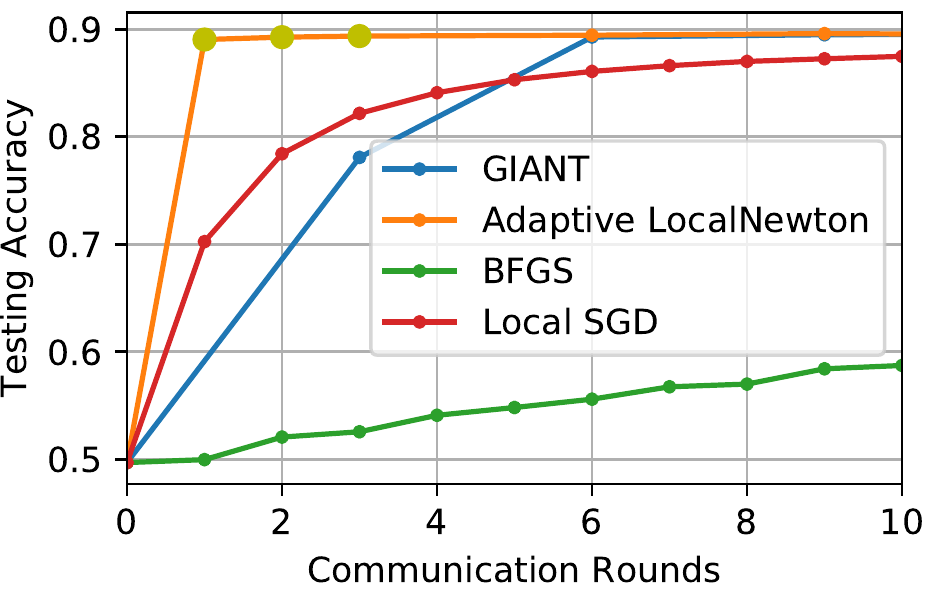}
    \caption{EPSILON dataset}
    \end{subfigure}

    \begin{subfigure}{.31\textwidth}
      \centering
    \includegraphics[scale=0.45]{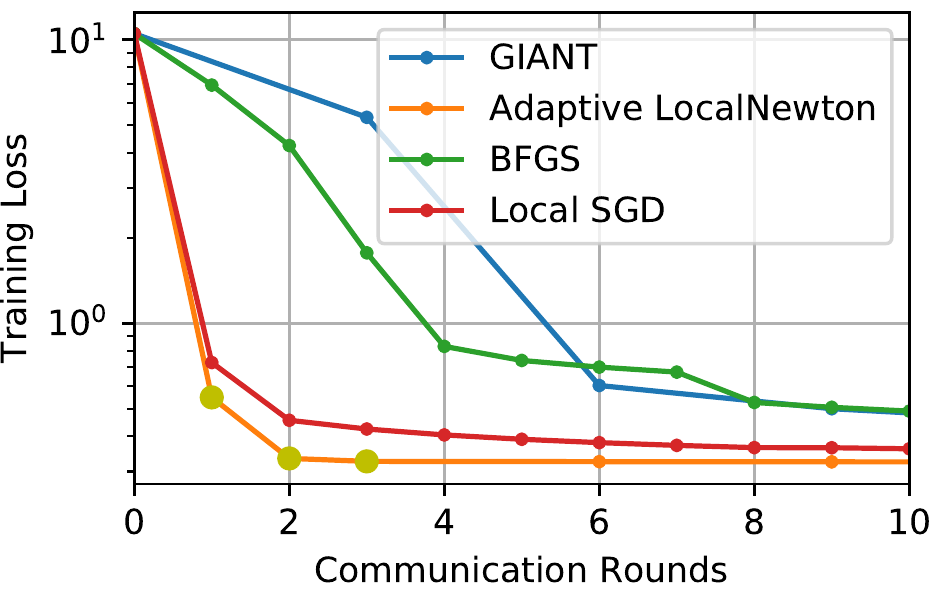}
    \end{subfigure}
~~~~~~~~~~
    \begin{subfigure}{.31\textwidth}
      \centering
    \includegraphics[scale=0.45]{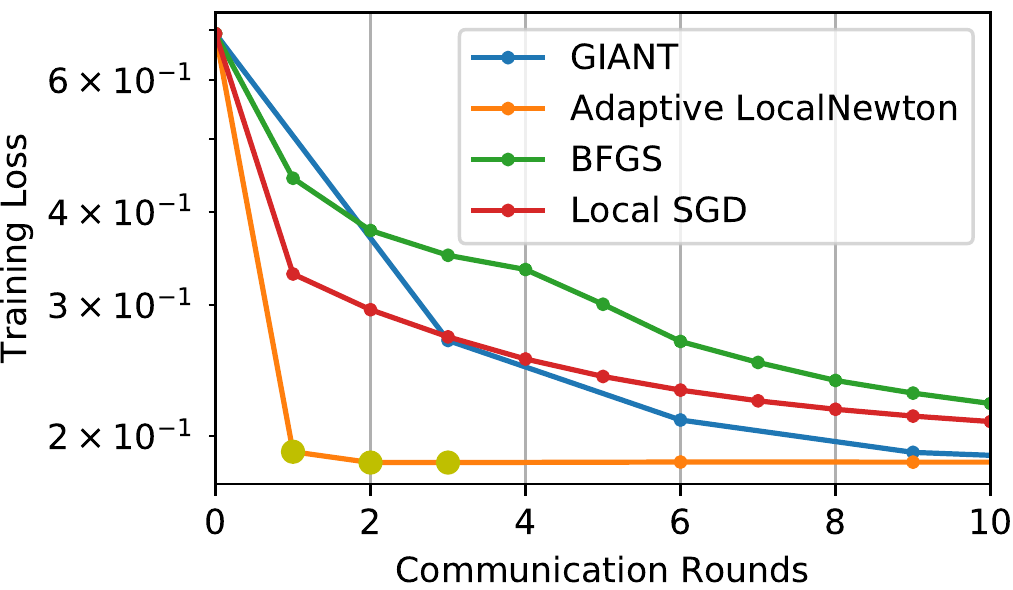}
    \end{subfigure}

    \begin{subfigure}{.31\textwidth}
      \centering
    \includegraphics[scale=0.45]{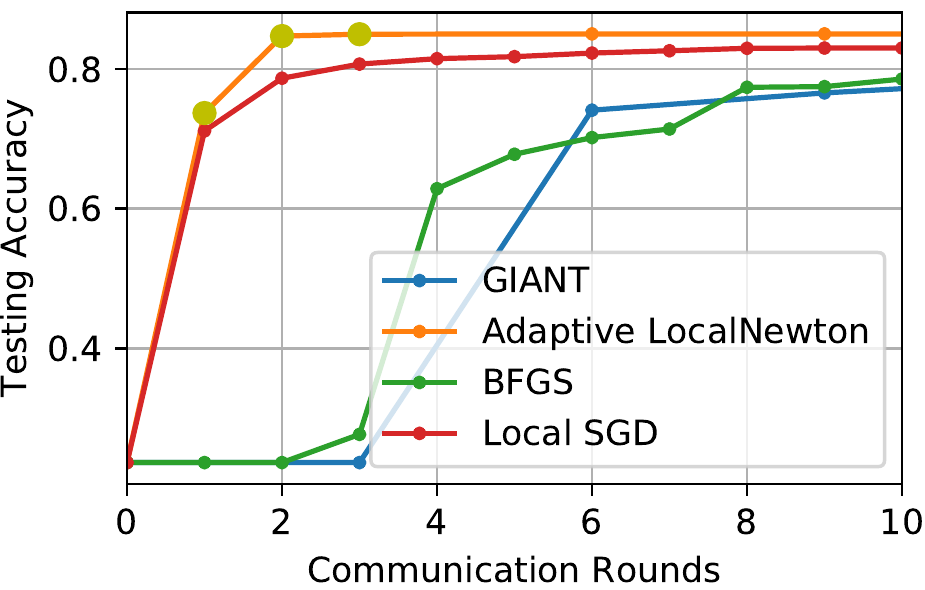}
    \caption{a9a dataset}
    \end{subfigure}
\hspace{15mm}
    \begin{subfigure}{.31\textwidth}
      \centering
    \includegraphics[scale=0.45]{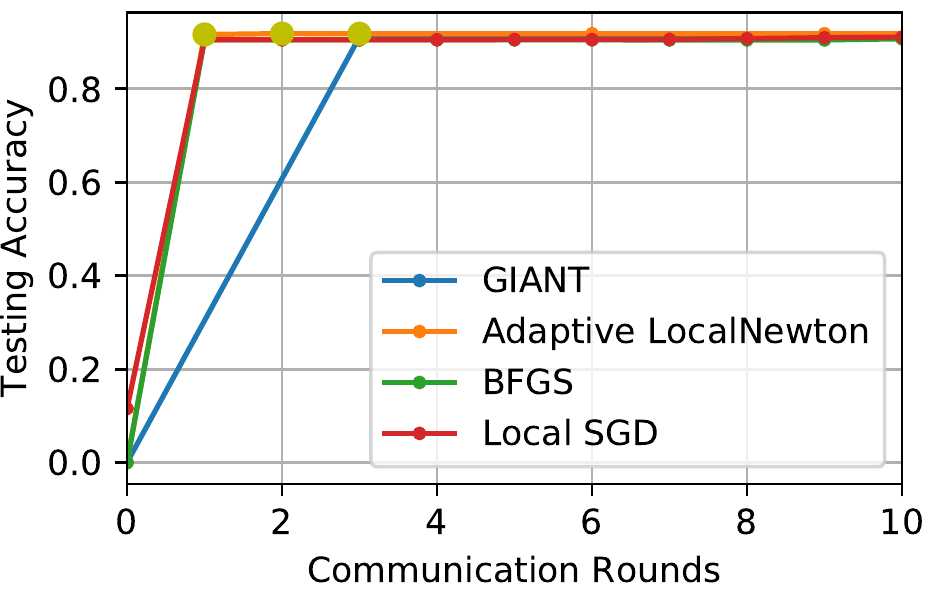}
    \caption{ijcnn1 dataset}
    \end{subfigure}
    \caption{Comparing \ln~with competing schemes w.r.t. communication rounds. Yellow dots on adaptive \ln~denote transition from larger to smaller values of $L$ (or to GIANT if $L=1$).}
    \label{fig:comm_rounds_all}
    \end{figure*}

\end{document}